\def\@xthm#1#2{\@begintheorem{#2}{\csname the#1\endcsname}{}\ignorespaces}
\def\@ythm#1#2[#3]{\@opargbegintheorem{#2}{\csname
       the#1\endcsname}{#3}\ignorespaces}
\def\@begintheorem#1#2#3{\par\addvspace{8pt plus3pt minus2pt}%
              \noindent{\csname#1headfont\endcsname#1\ \ignorespaces#3 #2.}%
              \csname#1font\endcsname\hskip6pt\ignorespaces}
\def\@endtheorem{\par\addvspace{8pt plus3pt minus2pt}\@endparenv}
\def\@opargbegintheorem#1#2#3{\par\addvspace{6pt plus3pt minus2pt}%
	\def\@tempa{#3}%
	\noindent{\bf #1 #2 \ifx\@tempa\empty\unskip\else\unskip\
(#3).\fi\hskip.5em}\csname#1font\endcsname\ignorespaces
#3\fi\hskip1em}\it
\def\@endtheorem{\par\addvspace{6pt plus3pt minus2pt}}
\newtheorem{theorem}{Theorem}[section]
\newtheorem{corollary}{Corollary}[section]
\newtheorem{lemma}{Lemma}[section]
\newtheorem{remark}{Remark}[section]
\newif\iflogo
\def\prbox{\par
	\vskip-\lastskip\vskip-\baselineskip\hbox to
\hsize{\hfill\fboxsep0pt\fbox{\phantom{\vrule width5pt height5pt
depth0pt}}}\global\logofalse}
\begin{document}

\author{Fr\'ed\'eric Abergel and Aymen Jedidi}
\address{Chaire de Finance
Quantitative, Laboratoire de Math\'ematiques Appliqu\'ees aux Syst\`emes, \'{E}cole Centrale Paris, 92290 Ch\^{a}tenay-Malabry, France.}
\email{frederic.abergel@ecp.fr, aymen.jedidi@ecp.fr.}

\date{November, 2012}

\title{\textbf{A Mathematical Approach to Order Book Modeling}}

\maketitle


\begin{abstract}
Motivated by the desire to bridge the gap between the microscopic
description of price formation (agent-based modeling) and the stochastic
differential equations approach used classically to describe price evolution at
macroscopic time scales, we present a mathematical study of the order book as a multidimensional continuous-time Markov chain
and derive several mathematical results in the case of
independent Poissonian arrival times. In particular, we show that the cancellation
structure is an important factor ensuring the existence of a stationary distribution
and the exponential convergence towards it. We also prove, by means of the functional
central limit theorem (FCLT), that the rescaled-centered price process converges
to a Brownian motion. We illustrate the analysis with numerical simulation and comparison against market data.
\end{abstract}

\section{Introduction and Background}

The emergence of electronic trading as a major means of trading financial assets
makes the study of the order book central to understanding the mechanisms of
price formation. In order-driven markets, buy and sell orders are matched
continuously subject to price and time priority. The \emph{order book} is the
list of all buy and sell limit orders, with their corresponding price and size,
at a given instant of time. Essentially, three types of orders can be submitted:
    \begin{itemize}
      \item \emph{Limit order}: Specify a price (also called ``quote'') at which
one is willing to buy or sell a certain number of shares;
      \item \emph{Market order}: Immediately buy or sell a certain number of
shares at the best available opposite quote;
      \item \emph{Cancellation order}: Cancel an existing limit order.
    \end{itemize}
In the 
literature, ``agents'' who submit exclusively limit orders
are referred to as \emph{liquidity providers}. Those who submit market orders
are referred to as \emph{liquidity takers}.

Limit orders are stored in the order book until they are either executed against
an incoming market order or canceled. The \emph{ask} price $P^A$ (or simply the
ask) is the price of the best (i.e. lowest) limit sell order. The \emph{bid}
price $P^B$ is the price of the best (i.e. highest) limit buy order. The gap
between the bid and the ask
\begin{equation}
S := P^A - P^B,
\end{equation}
is always positive and is called the \emph{spread}. Prices are not continuous,
but rather have a discrete resolution $\Delta P$, the \emph{tick}, which
represents the smallest quantity by which they can change. We define the
\emph{mid-price} as the average between the bid and the ask
\begin{equation}
P:=\frac{P^A+P^B}{2}.
\end{equation}

The price dynamics is the result of the interplay between the incoming order
flow and the order book \cite{Bouchaud}. Figure \ref{fig1} is a schematic
illustration of this process \cite{Ferraris}. Note that we chose to represent
quantities on the bid side of the book by non-positive numbers.

\begin{figure}
    \begin{center}
        \includegraphics[width=\textwidth]{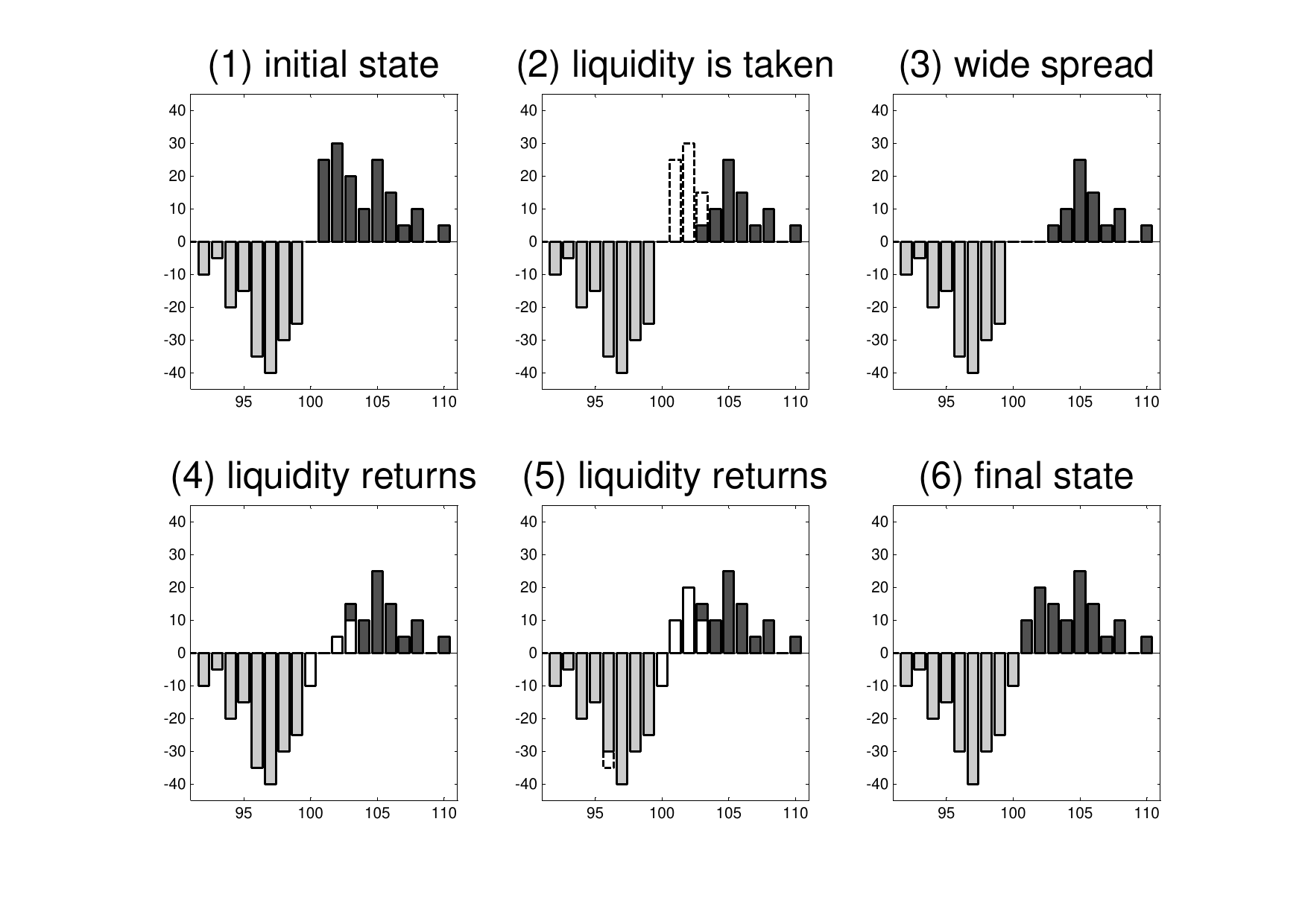}
        \caption{Order book schematic illustration: a buy market order
arrives and removes liquidity from the ask side,
        then sell limit orders are submitted and liquidity is restored.}
        \label{fig1}
    \end{center}
\end{figure}

Although in reality orders can have any size, we shall assume in most of the paper that all orders have a fixed unit size $q$. This assumption is convenient to carry out our analysis and is, for now, of secondary importance to the
problem we are interested in\footnote{It will be relaxed in section \ref{NumericalExample} where we resort to numerical simulation.}. Throughout the paper, we may refer to three different ``times'':
    \begin{itemize}
        \item \emph{Physical time} (or \emph{clock time}) in seconds,
        \item \emph{Event time} (or \emph{tick time}): the time counter is incremented by 1 every time an event (i.e. market, limit or cancellation order) occurs,
        \item \emph{Trade time} (or \emph{transaction time}): the time counter is incremented every trade (i.e. every market order).
    \end{itemize}

\smallskip \noindent \textbf{Related literature.} Order book modeling has been an area of intense research activity in the last decade. The remarkable interest in this area is due to two factors:

\begin{itemize}
    \item Widespread use of algorithmic trading in which the order book is the place where offer and demand meet,
    \item Availability of tick by tick data that record every change in the order book and allow precise analysis of the price formation process at the microscopic level.
\end{itemize}

Schematically, two modeling approaches have been successful in capturing key properties of the order book---at least partially. The first approach, led by economists, models the interactions between rational agents who act strategically: they choose their trading decisions as solutions to individual utility maximization problems (See e.g. \cite{ParlourSurvey2008} and references therein).

In the second approach, proposed by econophysicists, agents are assumed to act randomly. 
This is sometime referred to as zero-intelligence order book modeling, in the sense that order arrivals and placements are entirely stochastic. 
The focus here is more on the ``mechanistic'' aspects of the continuous double auction rather than the strategic interactions between agents.
Despite this apparent limitation, zero-intelligence (or statistical) 
order book models do capture many salient features of real markets (See \cite{DanielsFarmer, FarmerPatelli} and references therein).
Two notable developments in this strand of research are \cite{Maslov} 
who proposed one of the earliest stochastic order book models,
and \cite{ChalletStinchcombe1} who added the possibility to cancel existing limit orders.

In their seminal paper \cite{SmithFarmer}, Smith et al. develop a dynamical statistical order book model under the assumption of IID Poissonian order flow. They provide a thorough analysis of the model using simulation, dimensional analysis and mean field approximation. They study key characteristics of the model, namely:
\begin{enumerate}
    \item Price diffusion.
    \item Liquidity characteristics: average depth profile, bid-ask spread, price impact and time and probability to fill a limit order.
\end{enumerate}

One of the most important messages of their analysis is that zero-intelligence order book models are able to produce reasonable market dynamics and liquidity characteristics. Our focus here is on the first point, that is, \emph{the convergence of the price process, which is a jump process at the microscopic level, to a diffusive process\footnote{In this paper, we mean (abusively) by ``diffusive process'' or simply ``diffusion'' the mathematical concept of Brownian motion.} at macroscopic time scales}. The authors in \cite{SmithFarmer} suggest that a diffusive regime is reached. Their argument relies on a mean field approximation. Essentially, this amounts to neglecting the dependence between order fluctuations at adjacent price levels.

Another important paper of interest to us is \cite{ContStoikovTalreja}. Cont et al. propose to model the order book dynamics from the vantage point of queuing systems. They remarkably succeed in deriving many conditional probabilities of practical importance such as the probability of an increase in the mid-price, of the execution of an order at the bid before the ask quote moves, and of ``making the spread''. To our knowledge, they are the first to clearly set the problem of stochastic order book modeling in the context of Markov chains, which is a very powerful and well-studied mathematical concept.

\smallskip \noindent \textbf{Outline.} In this paper, we build on the models of \cite{ContStoikovTalreja} and \cite{SmithFarmer} to present a stylized description of the order book,
and derive several mathematical results in the case of independent Poissonian arrival times. In particular, we show that the cancellation structure is an important factor ensuring the existence of a stationary distribution for the order book and the exponential convergence towards it. We also prove, by means of the functional central limit theorem (FCLT), that the rescaled-centered price process converges to a Brownian motion, which is a new result.

The remainder of the paper is organized as follows. In section \ref{PerfectMarketMaking}, we motivate our approach using an elementary example where the spread is kept constant (``perfect market making''). In sections \ref{OrderBookDynamics} trough \ref{PriceDynamics}, we compute the infinitesimal generator associated with the order book in a general setting, and link the price dynamics to the instantaneous state of the order book. In section \ref{ErgodicityAndDiffusiveLimit}, we prove that the order book is \emph{ergodic}---in particular it has a \emph{stationary distribution}---that it converges to its stationary state \emph{exponentially fast}, and that the large-scale limit of the price process is a \emph{Brownian motion}. Our proofs rely on the theory of infinitesimal generators and Foster-Lyapunov stability criteria for Markov chains. We outline an order book simulation algorithm in section \ref{NumericalExample} and provide a numerical illustration. Finally, section \ref{Conclusion} summarizes our 
results and contains critiques of Markovian order book models.

\section{An Elementary Approximation: Perfect Market Making}

\label{PerfectMarketMaking}
We start with the simplest agent-based market model:
\begin{itemize}
    \item The order book starts in a full state: All limits above $P^A(0)$
and below $P^B(0)$ are filled with one limit order of unit size $q$. The spread
starts equal to $1$ tick;
    \item The flow of market orders is modeled by two independent Poisson
processes $M^+(t)$ (buy orders) and $M^-(t)$ (sell orders) with constant arrival
rates (or intensities) $\lambda^+$ and $\lambda^-$;
    \item There is one liquidity provider, who reacts \emph{immediately} after a
market order arrives so as to maintain the spread constantly equal to $1$ tick. He
places a limit order on the same side as the market order (i.e. a buy limit
order after a buy market order and vice versa) with probability $u$ and on the
opposite side with probability $1-u$.
\end{itemize}
The mid-price dynamics can be written in the following form
\begin{align}
dP(t) &= \Delta P \; (dM^{+}(t) - dM^{-}(t))Z,
\end{align}
where $Z$ is a Bernoulli random variable
\begin{equation}
Z = 0 \text{ with probability } (1-u),
\end{equation}
and
\begin{equation}
Z = 1 \text{ with probability } u.
\end{equation}
The infinitesimal generator\footnote{The infinitesimal generator of a
time-homogeneous Markov process $(\mathbf{X}(t))_{t\geq 0}$ is the operator
$\mathcal{L}$, if exists, defined to act on sufficiently regular functions
$f:\mathbb{R}^n \rightarrow \mathbb{R}$, by
\begin{equation}
\mathcal{L} f (\mathbf{x}) := \lim_{t \downarrow 0} \cfrac{\mathbb{E}
[f(\mathbf{X}(t))|\mathbf{X}(0)=\mathbf{x}] - f(\mathbf{x})}{t}.
\end{equation}
It provides an analytical tool to study $(\mathbf{X}(t)).$
} $\mathcal{L}$ associated with this dynamics is
\begin{equation}
\mathcal{L} f (P) = u \; \left[ \lambda^+ \; (f(P+\Delta P) - f) + \lambda^- \;
(f(P-\Delta P) - f)  \right],
\label{generator}
\end{equation}
where $f$ denotes a test function.
It is well known that a continuous limit is obtained under suitable assumptions
on the intensity and tick size. Noting that \eqref{generator} can be rewritten
as
\begin{eqnarray}
\label{genrator2}
\mathcal{L} f (P) &=& \frac{1}{2} u \; (\lambda^+ + \lambda^-) (\Delta P) ^2
\frac{f(P+\Delta P) - 2 f + f(P-\Delta P)}{(\Delta P)^2} \notag\\
&+& u \; (\lambda^+-\lambda^-) \Delta P \frac{f(P+\Delta P)-f(P-\Delta P)}{2 \Delta
P},
\end{eqnarray}
and under the following assumptions
\begin{equation}
u \; (\lambda^+ + \lambda^-) (\Delta P)^2 {\longrightarrow} \sigma^2 \; \text{ as }
\Delta P \rightarrow 0,
\end{equation}
and
\begin{equation}
u \; (\lambda^+ - \lambda^-) \Delta P {\longrightarrow}  \mu \; \text{ as }\Delta P
\rightarrow 0,
\end{equation}
the generator converges to the classical diffusion operator
\begin{equation}
\frac{\sigma^2}{2}\frac{\partial^2 f}{\partial P^2} + \mu \frac{\partial
f}{\partial P},
\end{equation}
corresponding to a Brownian motion with drift. This simple case is worked out as
an example of the type of limit theorems that we will be interested in in the
sequel. One should also note that a more classical approach using the Functional
Central limit Theorem (FCLT) as in \cite{Billingsley2} or \cite{Whitt} yields
similar results ; For given fixed values of $\lambda^+$, $\lambda^-$ and $\Delta
P$, the rescaled-centred price process
\begin{equation}
\frac{ P(n t)- n \mu t}{\sqrt{n} \sigma}
\end{equation}
converges as
$n \rightarrow \infty$,
to a standard Brownian motion $(B(t))$ where
\begin{equation}
\sigma = \Delta P \sqrt{(\lambda^+ + \lambda^-) u},
\end{equation}
and
\begin{equation}
\mu = \Delta P (\lambda^+ - \lambda^-) u.
\end{equation}
Let us also mention that one can easily achieve more complex diffusive limits such as
a local volatility model by imposing that the limit is a function of $P$ and $t$
\begin{equation}
u \; (\lambda^+ + \lambda^-) (\Delta P) ^2 \rightarrow \sigma^2(P,t),
\end{equation}
and
\begin{equation}
u \; (\lambda^+ - \lambda^-) \Delta P \rightarrow \mu(P,t).
\end{equation}
This is the case if the original intensities are functions of $P$ and $t$
themselves.

\section{Order Book Dynamics}

\label{OrderBookDynamics}

\subsection{Model setup: Poissonian arrivals, reference frame and boundary
conditions}

We now consider the dynamics of a general order book under the assumption of Poissonian arrival times
for market orders, limit orders and cancellations.
We shall assume that each side of the order book is fully described by a
\emph{finite} number of limits $K$, ranging from $1$ to $K$ ticks away from the
best available opposite quote. We will use the notation\footnote{In what follows, bold notation indicates vector quantities.}
\begin{equation}
\mathbf{X}(t) := (\mathbf{a}(t); \mathbf{b}(t)) := \left( a_1(t),\dots,a_K(t) ;
b_1(t),\dots,b_K(t) \right ),
\end{equation}
where
$\mathbf{a} :=  \left( a_1,\dots,a_K\right )$
designates the ask side of the order book and $a_i$ the number of shares
available $i$ ticks away from the best opposite quote,
and
$\mathbf{b} :=  \left( b_1,\dots,b_K\right )$
designates the bid side of the book. By doing so, we adopt the representation
described in \cite{ContStoikovTalreja} or \cite{SmithFarmer}\footnote{See
also \cite{GatheralOomen} for an interesting discussion.}, but depart slightly
from it by adopting a \emph{finite moving frame}, as we think it is realistic
and more convenient when scaling in tick size will be addressed.

Let us now recall the events that may happen:
\begin{itemize}
    \item arrival of a new market order;
    \item arrival of a new limit order;
    \item cancellation of an already existing limit order.
\end{itemize}
These events are described by \emph{independent} Poisson processes:
\begin{itemize}
    \item $M^{\pm}(t)$: arrival of new market order, with intensity
$\displaystyle \lambda^{M^+} \mathbb{I}({\mathbf{a}\neq\mathbf{0}})$ and
$\lambda^{M^-} \mathbb{I}({\mathbf{b}\neq\mathbf{0}})$;
    \item $L^{\pm}_i(t)$: arrival of a limit order at level $i$, with
intensity $\displaystyle \lambda^{L^{\pm}}_i$;
    \item $C^{\pm}_i(t)$: cancellation of a limit order at level $i$, with
intensity $\displaystyle \lambda^{C^+}_i {a_i} $ and $\displaystyle
\lambda^{C^-}_i |b_i| $.
\end{itemize}
$q$ is the size of any new incoming order, and the superscript ``$+$''
(respectively ``$-$'') refers to the ask (respectively bid) side of the book.
Note that the intensity of the cancellation process at level $i$ is \emph{proportional}
to the available quantity at that level. That is to say, each order at level $i$
has a lifetime drawn from an exponential distribution with intensity
$\lambda^{C^{\pm}}_i$. Note also that buy limit orders $L^-_i(t)$ arrive below
the ask price $P^{A}(t)$, and sell limit orders $L^+_i(t)$ arrive above the bid
price $P^B(t)$.

We impose constant boundary conditions outside the moving frame of size $2K$:
Every time the moving frame leaves a price level, the number of shares at that
level is set to $a_{\infty}$ (or $b_{\infty}$ depending on the side of the
book).
Our choice of a finite moving frame and constant\footnote{Actually, taking for
$a_{\infty}$ and $|b_{\infty}|$ independent positive random variables would not
change much our analysis. We take constants for simplicity.} boundary conditions
has three motivations. Firstly, it assures that the order book does not empty
and that $P^A$, $P^B$ are always well defined. Secondly, it keeps the spread $S$
and the increments of $P^A$, $P^B$ and $P=(P^A+P^B)/2$ bounded---This will
be important when addressing the scaling limit of the price. Thirdly, it makes
the model Markovian as we do not keep track of the price levels that have been
visited (then left) by the moving frame at some prior time. Figure
\ref{dynamics} is a representation of the order book using the above notations.

\begin{figure}
\begin{center}
\includegraphics[width=0.8\textwidth]{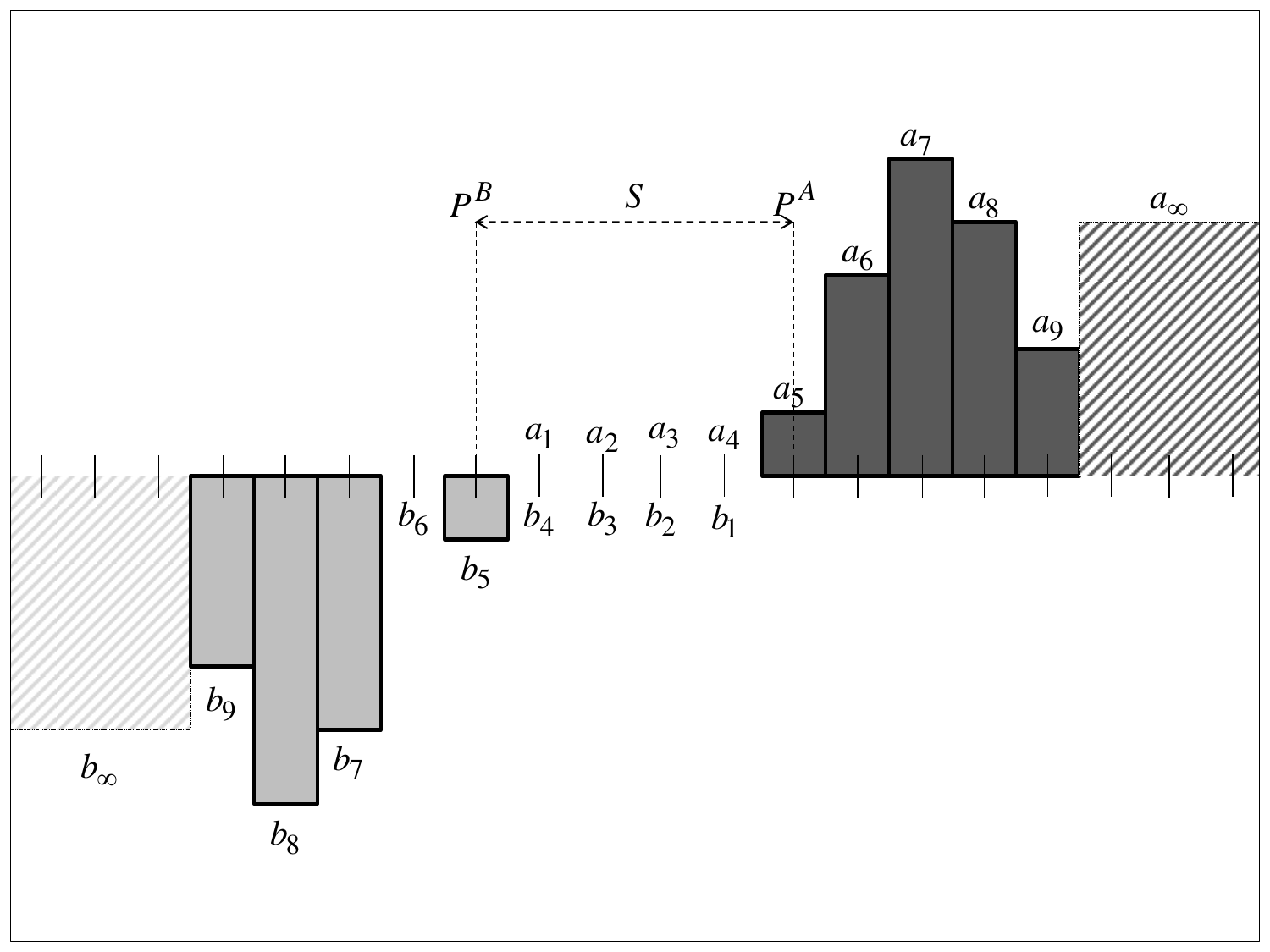}
\end{center}
\caption{\label{dynamics}\small Order book dynamics: in this example, $K=9$, $q = 1$,
$a_{\infty} = 4$, $b_{\infty}=-4$. The shape of the order book is such that
$\mathbf{a}(t)=(0,0,0,0,1,3,5,4,2)$ and $\mathbf{b}(t)=(0,0,0,0,-1,0,-4,-5,-3)$.
The spread $S(t)=5$ ticks. Assume that at time $t'>t$ a sell market order
$dM^-(t')$ arrives, then $\mathbf{a}(t')=(0,0,0,0,0,0,1,3,5)$,
$\mathbf{b}(t')=(0,0,0,0,0,0,-4,-5,-3)$ and $S(t')=7$. Assume instead that at
$t'>t$ a buy limit order $dL_1^-(t')$ arrives one tick away from the best
opposite quote, then $\mathbf{a}(t')=(1,3,5,4,2,4,4,4,4)$,
$\mathbf{b}(t')=(-1,0,0,0,-1,0,-4,-5,-3)$ and $S(t') = 1$.}
\end{figure}

\subsection{Comparison to previous results and models}
Before we proceed, we would like to recall some results already present in the literature and highlight their differences with respect to our analysis. Smith et al. have already investigated in \cite{SmithFarmer} the scaling properties of some liquidity and price characteristics in a stochastic order book model. These results are summarized in table \ref{SmithResults}. In the model of Smith et al. \cite{SmithFarmer}, orders arrive on an \emph{infinite} price grid (This is consistent as limit orders arrival rate \emph{per price level} is finite). Moreover, the arrival rates are independent of the price level, which has the advantage of enabling the analytical predictions summarized in table \ref{SmithResults}.
\begin{table}[H]
    \begin{center}
        \begin{tabular}{|c|c|}
                \hline
                Quantity & Scaling relation \\
                \hline
                Average asymptotic depth & ${\lambda^L}/{\lambda^C}$\\
                \hline
                Average spread & ${\lambda^M}/{\lambda^L} f(\epsilon, {\Delta P}/{p_c})$ \\
                \hline
                Slope of average depth profile & ${(\lambda^L)^2}/{\lambda^M \lambda^C} g(\epsilon, {\Delta P}/{p_c})$ \\
                \hline
                Price ``diffusion'' parameter at short time scales & ${(\lambda^M)^2 \lambda^C}/{\lambda^L} \epsilon^{-0.5}$\\
                \hline
                Price ``diffusion'' parameter at long time scales & ${(\lambda^M)^2 \lambda^C}/{\lambda^L} \epsilon^{0.5}$ \\
                \hline
                \end{tabular}
        \caption{\label{SmithResults} Results of Smith et al. $\displaystyle \epsilon:={q}/{({\lambda^M}/{2 \lambda^C})}$ is a ``granularity'' parameter that characterizes the effect of discreteness in order sizes, $\displaystyle p_c:={\lambda^M}/{2 \lambda^L}$ is a characteristic price interval, and $f$ and $g$ are slowly varying functions.} 
    \end{center}
\end{table}
We stress that, to our understanding, these results are obtained by mean-field approximations, which assume that the fluctuations at adjacent price levels are independent. This allows fruitful simplifications of the complex dynamics of the order book. In addition, the authors do not characterize the convergence of the coarse-grained price process in the sense of Stochastic Process Limits, nor do they show that the limiting process is precisely a Brownian motion (theorem \ref{MainResult}).

In the model of Cont el al. \cite{ContStoikovTalreja}, arrival rates are indexed by the distance to the best opposite quote, which is more realistic. The order book is constrained to a finite price grid $[1, P_{max}]$ that facilitates the analysis of the Markov chain. Here, we use a combination of the two models in that the arrival rates are not uniformly distributed across prices, and the reference frame is finite but moving. 
Cont et al. \cite{ContStoikovTalreja} have considered the question of the ergodicity of their order book model. We also address this question following a different route, and more importantly to our analysis, exhibit the \emph{rate of convergence} to the stationary state, which turns out to be the key of the proof of theorem \ref{MainResult}.

\subsection{Evolution of the order book}

We can write the following coupled SDEs for the quantities of outstanding
limit orders in each side of the order book:\footnote{Remember that, by
convention, the $b_i$'s are non-positive.}
\begin{eqnarray}
da_i(t) & = & -\left(q - \sum_{k=1}^{i-1}{a_k} \right)_+ dM^+(t) + q
dL^+_i(t) - q dC^+_i(t) \nonumber \\
 & + & (J^{M^-}(\mathbf{a})-\mathbf{a})_i dM^-(t)  +
\sum_{i=1}^K{(J^{L_i^-} (\mathbf{a})-\mathbf{a})_i dL_i^-(t)} \nonumber \\
&+&\sum_{i=1}^K{(J^{C_i^-}(\mathbf{a})-\mathbf{a})_i dC_i^-(t)}, 
\label{obdynamics}
\end{eqnarray}
and
\begin{eqnarray}
db_i(t) &  = & \left(q - \sum_{k=1}^{i-1}{|b_k|} \right)_+ dM^-(t) -
q dL_i^-(t) + q dC_i^-(t) \nonumber \\
 &  + & (J^{M^+}(\mathbf{b})-\mathbf{b})_i dM^+(t) +
\sum_{i=1}^K{(J^{L_i^+}(\mathbf{b})-\mathbf{b})_i dL_i^+(t)} \nonumber \\
& + & \sum_{i=1}^K{(J^{C_i^+}(\mathbf{b})-\mathbf{b})_i dC_i^+(t)},
\end{eqnarray}
where the $J$'s are \emph{shift operators} corresponding to the renumbering of
the ask side following an event affecting the bid side  of the book and vice
versa. For instance the shift operator corresponding to the arrival of a sell
market order $dM^-(t)$ of size $q$ is\footnote{For notational simplicity, we
write $J^{M^-}(\mathbf{a})$ instead of $J^{M^-}(\mathbf{a};\mathbf{b})$ etc. for
the shift operators.}
\begin{equation}
J^{M^-}(\mathbf{a}) = \left(\underbrace{0, 0, \dots, 0}_{\text{k times}}, a_1,
a_2, \dots, a_{K-k} \right),
\end{equation}
with
\begin{equation}
k := \inf\{p : \sum_{j=1}^{p}{|b_j|}>q\} - \inf\{p : {|b_p|}>0\}.
\end{equation}
Similar expressions can be derived for the other events affecting the order
book.

In the next sections, we will study some general properties of the order book,
starting with the generator associated with this $2K$-dimensional
continuous-time Markov chain.

\section{Infinitesimal Generator}

\label{InfinitesimalGenerator}

Let us work out the infinitesimal generator associated with the jump process
described above. We have
\begin{align}
\mathcal{L} f \left(\mathbf{a};\mathbf{b}\right) &=\lambda^{M^+} ( f\left( [a_i
- (q - A({i-1}) )_+]_+; J^{M^+}(\mathbf{b})\right) -f )\notag\\
&+ \sum_{i=1}^{K}{\lambda_i^{L^+}  (f\left(a_i+q; J^{L_i^+}(\textbf{b})\right) -
f)}\notag\\
&+ \sum_{i=1}^{K}{\lambda_i^{C^+} a_i (f\left(a_i-q;
J^{C_i^{+}}(\textbf{b})\right) - f)}\notag\\
& + \lambda^{M^-} {\left( f\left(J^{M^-}(\mathbf{a}); [b_i + (q - B({i-1})
)_+]_-\right) -f \right)}\notag\\
&+ \sum_{i=1}^{K}{\lambda_i^{L^-} (f\left(J^{L_i^-}(\textbf{a}); b_i-q\right) -
f)}\notag\\
&+  \sum_{i=1}^{K}{\lambda_i^{C^-} |b_i|
(f\left(J^{C_i^-}(\textbf{a}); b_i+q\right) - f)},
\label{infgen}
\end{align}
where, to ease the notations, we note $f(a_i; \mathbf{b})$ instead of
$f(a_1,\dots,a_i,\dots,a_K;\mathbf{b})$ etc. and
\begin{equation}
x_+:=\max(x,0), \qquad x_-:=\min(x,0), x \in \mathbb{R}.
\end{equation}
The operator above, although cumbersome to put in writing, is simple to
decipher: a series of standard difference operators corresponding to the
``deposition-evaporation'' of orders at each limit, combined with the shift
operators expressing the moves in the best limits and therefore, in the origins
of the frames for the two sides of the order book. Note the coupling of the two
sides: the shifts on the $a$'s depend on the $b$'s, and vice versa. More
precisely the shifts depend on the profile of the order book on the other side,
namely the cumulative depth up to level $i$ defined by
\begin{equation}
A(i) := \sum_{k=1}^i{a_k},
\end{equation}
and
\begin{equation}
B(i) := \sum_{k=1}^i{|b_k|},
\end{equation}
and the generalized inverse functions thereof
\begin{equation}
A^{-1}(q') := \inf\{p: \sum_{j=1}^{p} a_j > q' \},
\end{equation}
and
\begin{equation}
B^{-1}(q') := \inf\{p: \sum_{j=1}^{p} |b_j| > q' \},
\end{equation}
where $q'$ designates a certain quantity of shares\footnote{Note that a more rigorous
notation would be $$A(i,\mathbf{a}(t)) \text{ and } A^{-1}(q',\mathbf{a}(t))$$
for the depth and inverse depth functions respectively. We drop the dependence
on the last variable as it is clear from the context.}.
\begin{remark}
The index corresponding to the best opposite quote equals the spread $S$ in
ticks, that is
\begin{equation}
i_A := A^{-1}(0) = \inf\{p: \sum_{j=1}^{p} a_j > 0 \}  = \frac{S}{\Delta
P}:=i_S,
\end{equation}
and
\begin{equation}
i_B := B^{-1}(0) = \inf\{p: \sum_{j=1}^{p} |b_j| > 0 \} =\frac{S}{\Delta
P}:=i_S = i_A.
\end{equation}
\end{remark}

\section{Price Dynamics}

\label{PriceDynamics}

We now focus on the dynamics of the best ask and bid prices, denoted by $P^A(t)$
and $P^B(t)$. One can easily see that they satisfy the following SDEs:
\begin{eqnarray}
dP^A(t) &=& \Delta P [ (A^{-1}(q) - A^{-1}(0)) dM^+(t)  \nonumber \\
 &-& \sum_{i=1}^K{(A^{-1}(0)-i)_+ dL_i^+(t)} +
(A^{-1}(q)-A^{-1}(0))dC_{i_A}^+(t) ],
\end{eqnarray}
and
\begin{eqnarray}
dP^B(t) &=& - \Delta P [ (B^{-1}(q) - B^{-1}(0)) dM^-(t) \nonumber \\
 &-& \sum_{i=1}^K{(B^{-1}(0)-i)_+ dL_i^{-}(t)} +
(B^{-1}(q)-B^{-1}(0))dC_{i_B}^-(t) ],
\end{eqnarray}
which describe the various events that affect them: change due to a market
order, change due to limit orders inside the spread, and change due to the
cancellation of a limit order at the best price. Equivalently, the respective
dynamics of the mid-price and the spread are:
\begin{align}
dP(t) &= \frac{\Delta P}{2} \left[  (A^{-1}(q) - A^{-1}(0)) dM^+(t) - (B^{-1}(q)
- B^{-1}(0)) dM^-(t) \right.\notag\\
& - \sum_{i=1}^K{(A^{-1}(0)-i)_+ dL_i^+(t)} + \sum_{i=1}^K{(B^{-1}(0)-i)_+
dL_i^-(t)} \notag\\
& + \left.  (A^{-1}(q) - A^{-1}(0)) dC_{i_A}^+(t) - (B^{-1}(q) - B^{-1}(0))
dC_{i_B}^-(t) \right],
\label{midPriceIncrement}
\end{align}
\begin{align}
dS(t) &= \Delta P \left[ (A^{-1}(q) - A^{-1}(0)) dM^+(t) + (B^{-1}(q) -
B^{-1}(0)) dM^-(t)  \right.\notag\\
&- \sum_{i=1}^K{(A^{-1}(0)-i)_+dL_i^+(t)} -
\sum_{i=1}^K{(B^{-1}(0)-i)_+dL_i^-(t)} \notag\\
&+ \left.  (A^{-1}(q) - A^{-1}(0)) dC_{i_A}^+(t) + (B^{-1}(q) - B^{-1}(0))
dC_{i_B}^-(t) \right].
\end{align}
The equations above are interesting in that they relate in an explicit way the
profile of the order book to the size of an increment of the mid-price or the
spread, therefore linking the price dynamics to the order flow. For instance the
infinitesimal drifts of the mid-price and the spread, conditional on the
shape of the order book at time $t$, are given by:
\begin{align}
\mathbb{E}\left[dP(t)|(\mathbf{a};\mathbf{b})\right] &= \frac{\Delta P}{2}
\left[  (A^{-1}(q) - A^{-1}(0)) \lambda^{M^+} - (B^{-1}(q) - B^{-1}(0))
\lambda^{M^-} \right.\notag\\
& - \sum_{i=1}^K{(A^{-1}(0)-i)_+ \lambda_i^{L^+}} + \sum_{i=1}^K{(B^{-1}(0)-i)_+
\lambda_i^{L^-}}\notag\\
& + \left.  (A^{-1}(q) - A^{-1}(0)) \lambda_{i_A}^{C^+}  a_{i_A} -
(B^{-1}(q) - B^{-1}(0)) \lambda_{i_B}^{C^-} |b_{i_B}| \right]dt,
\label{MidPriceDrift}
\end{align}
\noindent and
\begin{align}
\mathbb{E}\left[dS(t)|(\mathbf{a};\mathbf{b})\right] &= \Delta P \left[
(A^{-1}(q) - A^{-1}(0)) \lambda^{M^+} + (B^{-1}(q) - B^{-1}(0)) \lambda^{M^-}
\right.\notag\\
& -  \sum_{i=1}^K{(A^{-1}(0)-i)_+ \lambda_i^{L^+}} -
\sum_{i=1}^K{(A^{-1}(0)-i)_+ \lambda_i^{L^-}} \notag\\
& + \left.  (A^{-1}(q) - A^{-1}(0)) \lambda_{i_A}^{C^+}  a_{i_A} +
(B^{-1}(q) - B^{-1}(0)) \lambda_{i_B}^{C^-} |b_{i_B}| \right] dt.
\label{SpreadDrift}
\end{align}

\section{Ergodicty and Diffusive Limit}
\label{ErgodicityAndDiffusiveLimit}

In this section, our interest lies in the following questions:
\begin{enumerate}
    \item Is the order book model defined above \emph{stable}?
    \item What is the \emph{stochastic-process limit} of the price at large time
scales?
\end{enumerate}
The notions of ``stability'' and ``large-scale limit'' will be made precise
below. We first need some useful definitions from the theory of Markov chains
and stochastic stability.  Let $(Q^t)_{t \geq 0}$ be the Markov transition
probability function of the order book at time $t$, that is
\begin{equation}
     Q^t(\mathbf{x},E) := \mathbb{P}\left[ \mathbf{X}({t}) \in E |
\mathbf{X}(0)=\mathbf{x}\right], \; t \in \mathbb{R}_+, \mathbf{x}\in
\mathcal{S}, E \subset \mathcal{S},
\end{equation}
where $\mathcal{S} \subset \mathbb{Z}^{2K}$ is the state space of the order
book. We recall that a (aperiodic, irreducible) Markov process is \emph{ergodic}
if an invariant probability measure $\pi$ exists and
\begin{equation}
    \lim_{t\rightarrow\infty}|| Q^t(\mathbf{x},.)-\pi(.)||=0, \forall
\mathbf{x}\in \mathcal{S},
\end{equation}
where $||.||$ designates for a signed measure $\nu$ the \emph{total variation
norm}\footnote{The convergence in total variation norm implies the more familiar
pointwise convergence
\begin{equation}
\lim_{t \rightarrow \infty}|Q^t(\mathbf{x},\mathbf{y})-\pi(\mathbf{y})|=0,
\mathbf{x},\mathbf{y} \in \mathcal{S}.
\end{equation}
Note that since the state space $\mathcal{S}$ is countable, one can formulate
the results without the need of a ``measure-theoretic'' framework. We prefer to
use this setting as it is more flexible, and can accommodate possible
generalizations of these results.
} defined as
\begin{equation}
    ||\nu||:=\sup_{f:|f|<1}|\nu(f)| = \sup_{E \in \mathcal{B}(\mathcal{S})}
\nu(E) - \inf_{E \in \mathcal{B}(\mathcal{S})} \nu(E).
    \label{totalvariationnorm}
\end{equation}
In \eqref{totalvariationnorm}, $\mathcal{B}(\mathcal{S})$ is the Borel
$\sigma$-field generated by $\mathcal{S}$, and for a measurable function $f$ on
$\mathcal{S}$,
$\nu(f):=\int_{\mathcal{S}}fd\nu.$

\emph{$V$-uniform ergodicity.}
A Markov process is said \emph{$V$-uniformly ergodic} if there exists a
coercive\footnote{That is, a function such that $V(\mathbf{x}) \to \infty$ as
$|\mathbf{x}|\to \infty$.} function $V>1$, an invariant distribution $\pi$, and
constants $0~<~r~<~1$, and $R<\infty$ such that
\begin{equation}
||Q^t(\mathbf{x},.) - \pi(.) || \leq R r^t V(\mathbf{x}), \mathbf{x} \in
\mathcal{S}, t\in\mathbb{R}_+.
\end{equation}
$V-$uniform ergodicity can be characterized in terms of the infinitesimal
generator of the Markov process. Indeed, it is shown in
\cite{MeynTweedieBook,MeynTweedie3} that it is equivalent to the existence of a
coercive function $V$ (the ``Lyapunov test function'') such that
\begin{equation}
\mathcal{L} V(\mathbf{x}) \leq -\beta V(\mathbf{x}) + \gamma, \; \;
\text{(Geometric drift condition.)}
\label{GeometricDriftCondition}
\end{equation}
for some positive constants $\beta$ and $\gamma$. (Theorems 6.1 and 7.1 in
\cite{MeynTweedie3}.) Intuitively, condition \eqref{GeometricDriftCondition}
says that the larger $V(\mathbf{X}(t))$ the stronger $\mathbf{X}$ is pulled back
towards the center of the state space $\mathcal{S}$. A similar drift condition
is available for discrete-time Markov processes
$(\mathbf{X}_n)_{n\in\mathbb{N}}$ and reads
\begin{equation}
\mathcal{D} V(\mathbf{x}) \leq -\beta V(\mathbf{x}) + \gamma
\mathbb{I}_{\mathcal{C}}(\mathbf{x}),
\end{equation}
where $\mathcal{D}$ is the \emph{drift operator}
\begin{equation}
    \mathcal{D}V(\mathbf{x}) := \mathbb{E}[V(\mathbf{X}_{n+1}) -
V(\mathbf{X}_n) | \mathbf{X}_{n} = \mathbf{x}].
\end{equation}
and $\mathcal{C} \subset \mathcal{S}$ a finite set. (Theorem 16.0.1 in
\cite{MeynTweedieBook}.) We refer to \cite{MeynTweedieBook} for further details.

\subsection{Ergodicity of the order book and rate of convergence to the
stationary state}

Of utmost interest is the behavior of the order book in its stationary state. We have the following result:

\begin{theorem}
If $\underline{\lambda_C} = \min_{1\leq i \leq K}\{\lambda_i^{C^{\pm}}\}>0$,
then $(\mathbf{X}(t))_{t \geq 0} = (\mathbf{a}(t);\mathbf{b}(t))_{t \geq 0}$ is
an ergodic Markov process. In particular $(\mathbf{X}(t))$ has a
\emph{stationary distribution} $\pi$. Moreover, the rate of convergence of the
order book to its stationary state is \emph{exponential}. That is, there exist
$r < 1$ and $R < \infty$ such that
    \begin{equation}
        ||Q^t(\mathbf{x},.) -  \pi(.)|| \leq R r^t V(\mathbf{x}), t \in
\mathbb{R}^+, \mathbf{x} \in \mathcal{S}. \label{expo}
    \end{equation}
    \label{theo1}
\end{theorem}

\begin{proof}
Let
\begin{equation}
V(\mathbf{x}) := V(\textbf{a};\textbf{b}) :=   \sum_{i=1}^{K}{a_i} +
\sum_{i=1}^{K}{|b_i|} + q
\end{equation}
be the total number of shares in the book ($+ q$ shares).
Using the expression of the infinitesimal generator \eqref{infgen} we have
\begin{align}
\mathcal{L} V (\mathbf{x}) &\leq - (\lambda^{M^+} + \lambda^{M^-})q +
\sum_{i=1}^{K} (\lambda_i^{L^+}+\lambda_i^{L^-})q - \sum_{i=1}^{K}(
\lambda_i^{C^+} a_i + \lambda_i^{C^-} |b_i|)q\nonumber\\
&  + \sum_{i=1}^K{ \lambda_i^{L^+} (i_S-i)_+ a_{\infty}} + \sum_{i=1}^K
{\lambda_i^{L^+}(i_S-i)_+ |b_{\infty}|} \label{proof}\\
&\leq - (\lambda^{M^+} + \lambda^{M^-})q  + (\Lambda^{L^-} + \Lambda^{L^+})q -
\underline{\lambda^C} q V(\mathbf{x}) \nonumber \\
&+ K (\Lambda^{L^-} a_{\infty} + \Lambda^{L^+} |b_{\infty}| ),\label{proof2}
\end{align}
where
\begin{equation}
\Lambda^{L^{\pm}}:=\sum_{i=1}^K{\lambda_i^{L^{\pm}}}
\text{ and }
\underline{\lambda^C}:=\min_{1\leq i \leq K}{\{\lambda_i^{C^{\pm}}\}} > 0.
\end{equation}
The first three terms in the right hand side of inequality \eqref{proof}
correspond respectively to the arrival of a market, limit or cancellation
order---ignoring the effect of the shift operators. The last two terms are due
to shifts occurring after the arrival of a limit order inside the spread. The
terms due to shifts occurring after market or cancellation orders (which we do
not put in the r.h.s. of \eqref{proof}) are negative, hence the inequality. To
obtain inequality \eqref{proof2}, we used the fact that the spread $i_S$ is
bounded by $K+1$---a consequence of the boundary conditions we impose--- and
hence $(i_S-i)_+$ is bounded by $K$.

The drift condition \eqref{proof2} can be rewritten as
\begin{equation}
    \mathcal{L} V (\mathbf{ \mathbf{x} }) \leq -\beta V(\mathbf{x}) +
\gamma, \label{driftcondition}
\end{equation}
for some positive constants $\beta, \gamma$.
Inequality \eqref{driftcondition} together with theorem 7.1 in
\cite{MeynTweedie3} let us assert that $(\mathbf{X}(t))$ is $V$-uniformly
ergodic, hence \eqref{expo}.
\end{proof}

\begin{corollary}
The spread $S(t) = A^{-1}(0,\mathbf{a}(t)) \Delta P=S(\mathbf{X}(t))$ has a
well-defined stationary distribution---This is expected as by construction the
spread is bounded by $K+1$.
\end{corollary}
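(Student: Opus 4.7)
The plan is to obtain the stationary distribution of the spread simply as the pushforward of the stationary distribution $\pi$ of the full order book under the spread map. By the theorem just proved, $(\mathbf{X}(t))_{t \geq 0}$ admits an invariant probability measure $\pi$ on $\mathcal{S} \subset \mathbb{Z}^{2N}$. Since the spread is a deterministic, measurable function of the state, $S(\mathbf{x}) = A^{-1}(0, \mathbf{a}) \, \Delta P$, the image measure $\pi_S := \pi \circ S^{-1}$ is a well-defined probability measure on $\mathbb{R}$, which I would propose as the stationary distribution of $(S(t))$. The first step is to check the invariance: for any bounded measurable $\phi$ and any $t \geq 0$, if $\mathbf{X}(0) \sim \pi$, then by invariance of $\pi$,
\begin{equation*}
\mathbb{E}_\pi[\phi(S(t))] = \mathbb{E}_\pi[(\phi \circ S)(\mathbf{X}(t))] = \int_\mathcal{S} (\phi \circ S)\, d\pi = \int_\mathbb{R} \phi\, d\pi_S,
\end{equation*}
so $\pi_S$ is indeed invariant for $(S(t))$.

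Next, I would inherit the exponential rate of convergence to $\pi_S$ from the $V$-uniform ergodicity of $\mathbf{X}$. For a bounded test function $\phi$ with $\|\phi\|_\infty \leq 1$, the function $\phi \circ S$ is bounded by $1$ on $\mathcal{S}$, so
\begin{equation*}
|\mathbb{E}_\mathbf{x}[\phi(S(t))] - \pi_S(\phi)| \leq \|Q^t(\mathbf{x}, \cdot) - \pi(\cdot)\| \leq R r^t V(\mathbf{x}),
\end{equation*}
giving geometric convergence of the law of $S(t)$ to $\pi_S$ in total variation. Because the finite moving frame forces $i_S \leq N+1$, the measure $\pi_S$ is in fact supported on the finite set $\{1, 2, \ldots, N+1\} \Delta P$, so no moment or tightness issue arises; this is the parenthetical remark in the statement.

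There is essentially no obstacle here beyond the formal verification of the pushforward: the theorem does all the real work. The only thing worth being careful about is that we apply the result to functionals of the full state rather than trying to show directly that $(S(t))$ is Markov in its own filtration (which it need not be), but this is handled automatically by working with $\phi \circ S$ as above.
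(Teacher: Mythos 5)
Your argument is correct and is exactly the reasoning the paper leaves implicit: since $S(\mathbf{X}(t))$ is a deterministic, bounded (by $(N+1)\Delta P$) measurable function of the ergodic chain, its stationary law is just the pushforward $\pi\circ S^{-1}$, with convergence in total variation inherited from the $V$-uniform ergodicity of $\mathbf{X}$. Your remark that one works with functionals $\phi\circ S$ of the full state rather than claiming $(S(t))$ is Markov in its own filtration is a sensible precision, but the approach is the same as the paper's.
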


\subsection{The embedded Markov chain}

Let $(\mathbf{X}_n)$ denote the \emph{embedded} Markov chain associated with
$(\mathbf{X}(t))$. In event time, the probabilities of each event are
``normalized'' by the quantity
\begin{equation}
    \Lambda(\mathbf{x}) := \lambda^{M^+} + \lambda^{M^-} + \Lambda^{L^+} +
\Lambda^{L^-} +  \sum_{i=1}^K \lambda^{C^+}_i a_i + \sum_{i=1}^K \lambda^{C^-}_i
|b_i|.
\end{equation}
For instance, the probability of a buy market order when the order book is in
state $\mathbf{x}$, is
\begin{equation}
    \mathbb{P}[\text{``Buy market order at time
$n$''}|\mathbf{X}_{n-1}=\mathbf{x}]:= p^{M^+}(\mathbf{x}) =
\frac{\lambda^{M^+}}{\Lambda(\mathbf{x})}.
\end{equation}
The choice of the test function $V(\mathbf{x})=\sum_i a_i + \sum_i b_i+q$ does
not yield a geometric drift condition, and more care should be taken to obtain a
suitable test function. Let $z>1$ be a fixed real number and consider the
function\footnote{To save notations, we always use the letter $V$ for the test
function.}
\begin{equation}
    V(\mathbf{x}) := z^{\sum_i a_i + \sum_i |b_i|}:=z^{\varphi(\mathbf{x})}.
\end{equation}
We have
\begin{theorem}$(\mathbf{X}_n)$ is $V$-uniformly ergodic. Hence, there exist
$r_2 < 1$ and $R_2 < \infty$ such that
    \begin{equation}
        ||U^n(\mathbf{x},.) -  \nu(.)|| \leq R_2 r_2^n V(\mathbf{x}), n
\in \mathbb{N}, , \mathbf{x} \in \mathcal{S}.\label{expoprime}
    \end{equation}
    where $(U^n)_{n \in \mathbb{N}}$ is the transition probability function
of $(\mathbf{X}_n)_{n \in \mathbb{N}}$ and $\nu$ its stationary distribution.
\end{theorem}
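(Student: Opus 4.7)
The plan is to verify the discrete-time geometric drift condition $\mathcal{D}V(\mathbf{x}) \leq -\beta V(\mathbf{x}) + \gamma \mathbb{I}_{\mathcal{C}}(\mathbf{x})$ on a finite set $\mathcal{C}$, and then invoke Theorem 16.0.1 of \cite{MeynTweedieBook} together with the standard checks of $\psi$-irreducibility, aperiodicity, and smallness of finite sets---all of which follow easily from $\underline{\lambda^C}>0$ and positivity of the limit-order rates. The multiplicative test function $V(\mathbf{x})=z^{\varphi(\mathbf{x})}$ is natural at this point because, after normalizing each Poisson intensity by $\Lambda(\mathbf{x})$, the probability of any given limit or market order decays like $1/\varphi(\mathbf{x})$ while the aggregate cancellation probability tends to $1$; hence large deviations of $\varphi$ must be controlled geometrically rather than only in mean, and the simple linear Lyapunov function used in continuous time no longer suffices.

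Writing $\mathcal{D}V(\mathbf{x})=V(\mathbf{x})\,\mathbb{E}[z^{\Delta\varphi}-1 \mid \mathbf{X}_n=\mathbf{x}]$ with $\Delta\varphi:=\varphi(\mathbf{X}_{n+1})-\varphi(\mathbf{x})$, the first task is to obtain sharp deterministic bounds on $\Delta\varphi$ per event. A careful reading of the shift operators yields the following trichotomy: $\Delta\varphi\leq -q$ for every cancellation and market-order event (the consumed $q$ shares are removed from one side, while any induced shift on the opposite side only introduces zero-quantity levels in the new spread and may additionally evict non-negative bulk quantities); $\Delta\varphi = q$ for limit orders placed strictly outside the spread; and $\Delta\varphi \leq q + N\max(a_\infty,|b_\infty|) =: C$ for limit orders falling inside the spread, the shift-induced gain being bounded because the opposite-side frame has only $N$ levels.

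Combining these per-event bounds and regrouping by event type yields
\begin{equation}
\frac{\mathcal{D}V(\mathbf{x})}{V(\mathbf{x})} \leq \frac{(\Lambda^{L^+}+\Lambda^{L^-})(z^{C}-1) + \bigl[\lambda^{M^+}+\lambda^{M^-}+\sum_{i=1}^N(\lambda_i^{C^+}a_i+\lambda_i^{C^-}|b_i|)\bigr](z^{-q}-1)}{\Lambda(\mathbf{x})}.
\end{equation}
The key inequality $\sum_i(\lambda_i^{C^+}a_i+\lambda_i^{C^-}|b_i|)\geq \underline{\lambda^C}\,\varphi(\mathbf{x})$, combined with the boundedness of the limit and market intensities, shows that the right-hand side converges to $z^{-q}-1<0$ as $\varphi(\mathbf{x})\to\infty$. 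Consequently, for any $\beta\in(0,1-z^{-q})$ there exists $K$ with $\mathcal{D}V(\mathbf{x})\leq -\beta V(\mathbf{x})$ whenever $\varphi(\mathbf{x})>K$, and on the finite sublevel set $\mathcal{C}:=\{\varphi\leq K\}$ both $V$ and $\mathcal{D}V$ are bounded, which supplies the desired drift inequality with some finite $\gamma$.

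The main technical obstacle is the event-by-event bookkeeping for the shift operators: one must confirm that cancellations and market orders can never \emph{increase} $\varphi$ (so that $V$ genuinely contracts on those transitions), and that the upward jump triggered by a limit order inside the spread is controlled by a universal constant independent of the pre-transition depth profile. Once these monotonicity-type facts are in hand, the remainder is a clean asymptotic computation driven by the linear growth of the cancellation intensity with $\varphi$.
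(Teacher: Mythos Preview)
Your proposal is correct and follows essentially the same route as the paper: bound the one-step change $\Delta\varphi$ event by event (market/cancellation $\leq -q$, limit order $\leq q+N d_\infty$), factor out $V(\mathbf{x})$, and observe that the cancellation contribution dominates as $\varphi(\mathbf{x})\to\infty$, producing a negative limit and hence the geometric drift condition on the complement of a finite sublevel set; Theorem~16.0.1 of Meyn--Tweedie then closes the argument. Your asymptotic step is in fact slightly cleaner than the paper's: you keep the exact cancellation sum $\sum_i(\lambda_i^{C^+}a_i+\lambda_i^{C^-}|b_i|)$ in both numerator and denominator and obtain the limit $z^{-q}-1$, whereas the paper sandwiches this sum between $\underline{\lambda^C}\varphi$ and $\overline{\lambda^C}\varphi$, yielding the looser limit $\tfrac{\underline{\lambda^C}}{\overline{\lambda^C}}(z^{-q}-1)$; and you make explicit the irreducibility/aperiodicity/petiteness checks that the paper leaves implicit.
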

\begin{proof}
\begin{eqnarray}
\mathcal{D}V(\mathbf{x}) & \leq & \frac{\lambda^{M^+}}{\Lambda(\mathbf{x})} (
z^{\sum_i a_i - q + \sum_i |b_i|} - V(\mathbf{x}) ) \nonumber \\
 & + & \sum_j \frac{\lambda^{L^+}_j}{\Lambda(\mathbf{x})} ( z^{\sum_i a_i + q +
\sum_i |b_i| + K |b_{\infty}|} - V(\mathbf{x}) ) \nonumber \\
 & + & \sum_j \frac{\lambda^{C^+}_j a_j}{\Lambda(\mathbf{x})} (z^{\sum_i a_i - q
+ \sum_i |b_i|} - V(\mathbf{x})) \nonumber \\
 & + & \frac{\lambda^{M^-}}{\Lambda(\mathbf{x})} (z^{\sum_i a_i + \sum_i |b_i| -
q} - V(\mathbf{x})) \nonumber \\
 & + & \sum_j \frac{\lambda^{L^-}_j}{\Lambda(\mathbf{x})} (z^{\sum_i a_i + K
a_{\infty} + \sum_i |b_i| + q} - V(\mathbf{x})) \nonumber \\
 & + & \sum_j \frac{\lambda^{C^-}_j |b_j|}{\Lambda(\mathbf{x})} (z^{\sum_i a_i +
\sum_i |b_i| - q} - V(\mathbf{x})).
 \label{Majoration}
\end{eqnarray}
If we factor out $V(\mathbf{x})=z^{\sum a_i + \sum b_i}$ in the r.h.s of
\eqref{Majoration}, we get
\begin{eqnarray}
\frac{\mathcal{D}V(\mathbf{x})}{V(\mathbf{x})} & \leq & \frac{\lambda^{M^+} +
\lambda^{M^-}}{\Lambda(\mathbf{x})} (z^{-q}-1) \nonumber \\
&+& \frac{\Lambda^{L^-} + \Lambda^{L^-}}{\Lambda(\mathbf{x})} (z^{q+K
d_{\infty}}-1) \nonumber \\
& + & \frac{\sum_j \lambda^{C^+}_j  a_j + \sum_j \lambda^{C^-}_j
|b_j|}{\Lambda(\mathbf{x})} (z^{-q} - 1),
\label{Drift}
\end{eqnarray}
where
\begin{equation}
    d_{\infty} := \max\{a_{\infty}, |b_{\infty}|\}.
\end{equation}
Then
\begin{eqnarray}
 \frac{\mathcal{D}V(\mathbf{x})}{V(\mathbf{x})}  & \leq & \frac{\lambda^{M^+} +
\lambda^{M^-}}{\lambda^{M^+} + \lambda^{M^-} + \Lambda^{L^+} + \Lambda^{L^-}  +
\overline{\lambda^C} \varphi(\mathbf{x})}  (z^{-q}-1) \nonumber \\
&+& \frac{\Lambda^{L^+} + \Lambda^{L^-}}{\lambda^{M^+} + \lambda^{M^-} +
\Lambda^{L^+} + \Lambda^{L^-} + \underline{\lambda^C} \varphi(\mathbf{x})}
(z^{q+K d_{\infty}}-1) \nonumber \\
& + & \frac{\underline{\lambda^C} \varphi(\mathbf{x}) }{\lambda^{M^+} +
\lambda^{M^-} + \Lambda^{L^+} + \Lambda^{L^-}  +  \overline{\lambda^C}
\varphi(\mathbf{x})} (z^{-q}-1),
\label{EventTimeDrift}
\end{eqnarray}
with the usual notations
\begin{equation}
\underline{\lambda^C} := \min{\lambda_i^{C^{\pm}}} \text{ and }
\overline{\lambda^C} := \max{\lambda_i^{C^{\pm}}}.
\end{equation}
Denote the r.h.s of \eqref{EventTimeDrift} $B(\mathbf{x})$. Clearly
\begin{equation}
    \lim_{\varphi(\mathbf{x}) \rightarrow \infty} B(\mathbf{x}) =
\frac{\underline{\lambda^C} (z^{-q} -1)} {\overline{\lambda^C}} < 0,
\end{equation}
hence there exists $A>0$ such that for $\mathbf{x} \in \mathcal{S}$ and
$\varphi(\mathbf{x}) > A$
\begin{equation}
    \frac{\mathcal{D}V(\mathbf{x})}{V(\mathbf{x})} \leq
\frac{\underline{\lambda^C} (z^{-q} -1)} {2 \overline{\lambda^C}} := - \beta <
0.
\end{equation}
Let $\mathcal{C}$ denote the finite set
\begin{equation}
\mathcal{C} = \{\mathbf{x} \in \mathcal{S} : \varphi(\mathbf{x}) = \sum_i a_i +
\sum_i b_i \leq A \}.
\end{equation}
We have
\begin{equation}
\mathcal{D} V(\mathbf{x}) \leq - \beta V(\mathbf{x}) + \gamma
\mathbb{I}_{\mathcal{C}}(\mathbf{x}),
\end{equation}
with
\begin{equation}
\gamma := \max_{\mathbf{\mathbf{x}}\in
\mathcal{C}}{\mathcal{D}V(\mathbf{\mathbf{x}})}.
\end{equation}
Therefore $(\mathbf{X}_n)_{n\geq0}$ is $V$-uniformly ergodic, by theorem 16.0.1
in \cite{MeynTweedieBook}.
\end{proof}

\subsection{The case of constant cancellation rates}
The proof above can be applied to the case where the cancellation rates do not depend on
 the state of the order book $\mathbf{X'}(t)$---We shall denote the
order book $\mathbf{X}'(t)$ in order to highlight that the assumption of
proportional cancellation rates is relaxed. The probability of a cancellation
$dC^{\pm}_i(t)$ in $[t, t + \delta t]$ is now
\begin{equation}
\mathbb{P}[C^{\pm}_i(t + \delta t) - C^{\pm}_i(t) =
1|\mathbf{X'}(t)=\mathbf{x'}] = \lambda^{C^{\pm}}_i \delta t + o(\delta t),
\end{equation}
instead of
\begin{equation}
\mathbb{P}[C^{+}_i(t + \delta t) - C^{+}_i(t) =
1|\mathbf{X'}(t)=\mathbf{x'}] = \lambda^{C^{+}}_i a_i(t) \delta t + o(\delta t),
\end{equation}
where $\lim_{\delta t \to 0} {o(\delta t)}/{\delta t} =0.$
Since $\Lambda = \lambda^{M^+} + \lambda^{M^-} + \Lambda^{L^+} + \Lambda^{L^-} +
 \sum_{i=1}^K \lambda^{C^+}_i + \sum_{i=1}^K \lambda^{C^-}_i$ does not depend on
$\mathbf{x'}$, the analysis of the stability of the continuous-time process
$(\mathbf{X}'(t))$ and its discrete-time counterpart $(\mathbf{X}'_n)$ are
essentially the same.

We have the following result:
\begin{theorem}
Set
\begin{equation}
\Lambda^{C^{\pm}}:=\sum_{i=1}^K \lambda_i^{C^{\pm}} \text{ and }
\Lambda^{L^{\pm}}:=\sum_{i=1}^K \lambda_i^{L^{\pm}}.
\end{equation}
Under the condition
\begin{equation}
\lambda^{M^+} + \lambda^{M^-}   + \Lambda^{C^+} + \Lambda^{C^-} > (\Lambda^{L^+}
+ \Lambda^{L^-}) (1+K d_{\infty}),
\label{condition}
\end{equation}
$(\mathbf{X}'_n)$ is $V$-uniformly ergodic. There exist $r_3 < 1$ and $R_3 <
\infty$ such that
    \begin{equation}
        ||U'^n(\mathbf{x},.) - {\nu}'(.)|| \leq R_3 r_3^n V(\mathbf{x}), n \in
\mathbb{N}, \mathbf{x} \in \mathcal{S}.
    \end{equation}
The same is true for $(\mathbf{X}'(t))$.
\end{theorem}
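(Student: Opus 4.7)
The plan is to mimic the proof of the previous theorem, using the same exponential test function $V(\mathbf{x}) := z^{\varphi(\mathbf{x})}$ with $\varphi(\mathbf{x}) := \sum_i a_i + \sum_i |b_i|$ and $z>1$, and to exploit the key simplification that the total event rate
\[
\Lambda = \lambda^{M^+}+\lambda^{M^-}+\Lambda^{L^+}+\Lambda^{L^-}+\Lambda^{C^+}+\Lambda^{C^-}
\]
no longer depends on $\mathbf{x}'$. This means that in the analogue of inequality \eqref{Majoration}, every transition probability is a fixed constant, so after dividing by $V(\mathbf{x})$ we obtain a uniform bound $\mathcal{D}V(\mathbf{x})/V(\mathbf{x})\le g(z)$, with
\[
g(z) := \tfrac{1}{\Lambda}\bigl[(\lambda^{M^+}+\lambda^{M^-}+\Lambda^{C^+}+\Lambda^{C^-})(z^{-q}-1) + (\Lambda^{L^+}+\Lambda^{L^-})(z^{q+Nd_\infty}-1)\bigr],
\]
where the $z^{q+Nd_\infty}$ factor absorbs both the unit increment from a limit order and the worst-case jump due to a shift operator filling a vacated frame with boundary values $a_\infty,|b_\infty|$.

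Next I would choose $z$ so that $g(z)<0$. Since $g(1)=0$ and
\[
g'(1) = \tfrac{1}{\Lambda}\bigl[-q(\lambda^{M^+}+\lambda^{M^-}+\Lambda^{C^+}+\Lambda^{C^-}) + (q+Nd_\infty)(\Lambda^{L^+}+\Lambda^{L^-})\bigr],
\]
condition \eqref{condition} is precisely $g'(1)<0$; hence for $z>1$ sufficiently close to $1$ we get $g(z)=:-\beta<0$, and therefore the uniform geometric drift inequality $\mathcal{D}V(\mathbf{x})\le -\beta V(\mathbf{x})$ holds on all of $\mathcal{S}$. In particular $\mathcal{D}V(\mathbf{x})\le -\beta V(\mathbf{x})+\gamma\mathbb{I}_{\mathcal{C}}(\mathbf{x})$ holds trivially with any finite $\mathcal{C}$ and $\gamma=0$, so Theorem 16.0.1 of \cite{MeynTweedieBook} yields $V$-uniform ergodicity of $(\mathbf{X}'_n)$ and the claimed bound (the absence of a $V(\mathbf{x})$ factor in the statement reflects that here the drift is negative everywhere, with no exceptional small set needed).

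To transfer ergodicity from the embedded chain to the continuous-time process $(\mathbf{X}'(t))$, I would observe that the jump times form a homogeneous Poisson process of constant intensity $\Lambda$ (since cancellation rates no longer depend on the state), and $\mathbf{X}'(t)=\mathbf{X}'_{N(t)}$ with $N(t)$ the event counter. Standard subordination then transports the geometric convergence rate $r_3^n$ for the embedded chain into an exponential rate $e^{-\Lambda(1-r_3)t}$ for the continuous-time semigroup, giving the final assertion.

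The main technical obstacle is the careful accounting for the shift operators in the drift computation: one must verify that after any single event the quantity $\varphi(\mathbf{x}')-\varphi(\mathbf{x})$ is bounded above by $q+Nd_\infty$ and below by $-q$, uniformly in $\mathbf{x}$, so that the exponentials $z^{q+Nd_\infty}$ and $z^{-q}$ in $g(z)$ are legitimate uniform bounds. Once this bookkeeping is done, everything reduces to the one-variable calculus on $g$ and a direct application of the Meyn–Tweedie criterion.
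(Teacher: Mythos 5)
Your proposal is correct and follows essentially the same route as the paper: the same exponential test function $V(\mathbf{x})=z^{\varphi(\mathbf{x})}$, the same drift inequality with all transition probabilities now constant because $\Lambda$ no longer depends on the state, and your condition $g'(1)<0$ is exactly the paper's first-order Taylor expansion in $\epsilon=z-1$, leading to $\mathcal{D}V\leq-\beta V$ and the Meyn--Tweedie criterion (Theorem 16.0.1). The only addition is your explicit Poisson-subordination remark for transferring the rate to $(\mathbf{X}'(t))$, which the paper merely asserts; note also that for the market/cancellation terms you only need $\varphi(\mathbf{x}')-\varphi(\mathbf{x})\leq-q$ (the shifts can remove more than $q$ shares), not a lower bound, but this does not affect the argument.
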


\begin{proof}
Let us prove the result for $(\mathbf{X}'_n)$. Inequality \eqref{Drift} is still
valid by the same arguments, but this time the arrival rates are independent of
$\mathbf{x'}$
\begin{eqnarray}
 \frac{\mathcal{D}V(\mathbf{x'})}{V(\mathbf{x'})}  & \leq & \frac{\lambda^{M^+}
+ \lambda^{M^-}}{\Lambda}  (z^{-q}-1) \nonumber \\
&+& \frac{\Lambda^{L^+} + \Lambda^{L^-}}{\Lambda} (z^{q+K d_{\infty}}-1)
\nonumber \\
& + & \frac{\Lambda^{C^+} + \Lambda^{C^-}} {\Lambda} (z^{-q}-1).
\end{eqnarray}
Set
\begin{equation}
z =: 1 + \epsilon > 1.
\end{equation}
A Taylor expansion in $\epsilon$ gives
\begin{eqnarray}
 \Lambda \frac{\mathcal{D}V(\mathbf{x})}{V(\mathbf{x})}  & \leq & (\lambda^{M^+}
+ \lambda^{M^-}) (-q \epsilon) \nonumber \\
 & + &  (\Lambda^{L^+} + \Lambda^{L^-}) (q+K d_{\infty}) \epsilon \nonumber \\
 & + & (\Lambda^{C^+} + \Lambda^{C^-}) (- q \epsilon) + o(\epsilon).
 \label{DL}
\end{eqnarray}
For $\epsilon>0$ small enough, the sign of \eqref{DL} is determined by the
quantity
\begin{equation}
-(\lambda^{M^+} + \lambda^{M^-})  +  (\Lambda^{L^+} + \Lambda^{L^-}) (1+K
d_{\infty}) - (\Lambda^{C^+} + \Lambda^{C^-}).
\end{equation}
Hence, if \eqref{condition} holds
\begin{eqnarray}
 \mathcal{D}V(\mathbf{x}) & \leq - \beta V(\mathbf{x}) \;\text{ for some } \beta
>0,
\end{eqnarray}
and a geometric drift condition is obtained for $\mathbf{X}'$.
\end{proof}
If for concreteness we set $q = 1$ share, and all the arrival rates are symmetric and do not depend on $i$, then condition \eqref{condition} can be rewritten as
\begin{equation}
\lambda^M + K \lambda^C > K \lambda^L (1 + K d_{\infty}).
\end{equation}
where $K$ is the size of the order book and $d_{\infty}$ is the depth far away
from the mid-price. Note that the above is a \emph{sufficient} condition for
($V$-uniform) stability.

\subsection{Large-scale limit of the price process}

We are now able to answer the main question of this paper. Let us define the
process $$e(t) \in \{1, \dots, 2 (2 K+1)\}$$ which indicates the last event
\begin{equation}
\{M^{\pm}, L_i^{\pm},C_i^{\pm}\}_{i\in\{1,\dots,K\}},\nonumber
\end{equation}
that has occurred before time $t$.
\begin{lemma}
If we append $e(t)$ to the order book $(\mathbf{X}(t))$, we get a Markov
process
\begin{equation}
\mathbf{Y}(t):=(\mathbf{X}(t),e(t))
\end{equation}
which still satisfies the drift condition \eqref{GeometricDriftCondition}.
\end{lemma}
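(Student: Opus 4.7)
The plan is to verify two things: that $\mathbf{Y}(t)$ is genuinely Markov, and that the drift condition \eqref{GeometricDriftCondition} transfers from $\mathbf{X}(t)$ to $\mathbf{Y}(t)$ essentially for free. The guiding idea is that appending $e(t)$ only records a label and introduces no new dynamics of its own.

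First I would check the Markov property. The order book is driven by independent Poisson processes whose intensities depend solely on the current state $\mathbf{X}(t)$ --- through the coordinates $a_i$, $|b_i|$ and the indicators $\mathbb{I}(\mathbf{a}\neq\mathbf{0})$, $\mathbb{I}(\mathbf{b}\neq\mathbf{0})$, but never on any past value of $e(\cdot)$. Conditional on $(\mathbf{X}(t),e(t))=(\mathbf{x},e)$, the waiting time to the next event and the identity of that event are therefore independent of the history strictly prior to $t$, by the memoryless property of the exponential clocks. The post-event value of $e$ is simply the label of the event that just fired, which is a deterministic function of $\mathbf{x}$ and of the exogenous Poisson randomness. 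This is exactly the Markov property for $\mathbf{Y}$, and it is the only step worth stating carefully.

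Next I would lift the Lyapunov function from the continuous-time ergodicity theorem to the enlarged state space by setting
\[ \tilde{V}(\mathbf{x},e) := V(\mathbf{x}) = \sum_{i=1}^{N} a_i + \sum_{i=1}^{N} |b_i| + q. \]
Because $e$ ranges over the finite set $\{1,\dots,2(2N+1)\}$, $\tilde{V}$ is coercive on $\mathcal{S}\times\{1,\dots,2(2N+1)\}$ precisely because $V$ is coercive on $\mathcal{S}$, and $\tilde{V}>1$ (adjusting the additive constant if need be). Since $\tilde{V}$ does not depend on its $e$-argument, every jump increment $\tilde{V}(\mathbf{x}',e')-\tilde{V}(\mathbf{x},e)$ collapses to $V(\mathbf{x}')-V(\mathbf{x})$, and the weighted sum over event types with their respective intensities reproduces verbatim the computation already carried out in the proof of the ergodicity theorem. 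Consequently,
\[ \tilde{\mathcal{L}}\tilde{V}(\mathbf{x},e) = \mathcal{L}V(\mathbf{x}) \leq -\beta V(\mathbf{x}) + \gamma = -\beta\tilde{V}(\mathbf{x},e) + \gamma, \]
with the same constants $\beta,\gamma>0$ as before, which is the desired drift inequality.

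The potentially delicate point is really only the Markov property, where one might worry that $e(t)$ secretly carries information about the past relevant to the future; the insensitivity of the intensities to $e$, combined with the independence and memorylessness of the Poisson clocks, rules this out. The drift estimate is then an immediate corollary of the earlier computation and requires no new work.
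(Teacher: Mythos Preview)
Your proof is correct and follows essentially the same reasoning as the paper: both exploit that $e(t)$ takes values in a finite set, so the drift estimates from the earlier theorems carry over with only cosmetic changes. The one difference is in the choice of Lyapunov function. You take $\tilde V(\mathbf{x},e)=V(\mathbf{x})$, independent of $e$, which makes the generator computation reduce \emph{identically} to the one already done for $\mathbf{X}$ and yields the same constants $\beta,\gamma$ for free. The paper instead adds the label, using $V(\mathbf{y})=q+\sum_i a_i+\sum_i|b_i|+e$ in continuous time and $V(\mathbf{y})=\exp\bigl(\sum_i a_i+\sum_i|b_i|+e\bigr)$ in discrete time, and then appeals to the boundedness of $e$ to absorb the extra terms. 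Your choice is arguably cleaner for the lemma as stated; the paper's version has the minor advantage of simultaneously recording the discrete-time drift condition for the embedded chain $(\mathbf{Y}_n)$, which is the form actually invoked in the subsequent proof of the functional CLT.
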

\begin{proof}
Since $e(t)$ takes its values in a finite set, the arguments of the previous
sections are valid with minor modifications, and with the test functions
\begin{eqnarray}
V(\mathbf{y}) &:=& q+ \sum a_i + \sum |b_i| + e, \; \text{ (continuous-time
setting)}\\
V(\mathbf{y}) &:=& e^{\sum a_i + \sum |b_i| + e}. \; \text{ (discrete-time
setting)}
\end{eqnarray}
The $V$-uniform ergodicity of $(\mathbf{Y}(t))$ and $(\mathbf{Y}_n)$ follows.

\end{proof}
Given the state $\mathbf{X}_{n-1}$ of the order book at time $n-1$ and the event
$e_n$, the price increment at time $n$ can be determined. (See equation
\eqref{midPriceIncrement}.)  We define the sequence of random variables
\begin{equation}
\eta_n := \Psi(\mathbf{X}_{n-1}, e_n) := \Phi(\mathbf{Y}_n,\mathbf{Y}_{n-1}),
\end{equation}
as the price increment at time $n$. $\Psi$ is a deterministic function giving
the elementary ``price-impact'' of event $e_n$ on the order book at state
$\textbf{X}_{n-1}$. Let $\mu$ be the stationary distribution of
$(\mathbf{Y}_n)$, and $M$ its transition probability function. We are interested
in the random sums
\begin{equation}
    P_n :=  \sum_{k=1}^{n} \overline{\eta}_n = \sum_{k=1}^{n}
\overline{\Phi}(\mathbf{Y}_k,\mathbf{Y}_{k-1}),
\end{equation}
where
\begin{equation}
\overline{\eta}_k := \eta_k - \mathbb{E}_{\mu}[\eta_k]  = \overline{\Phi}_k =
\Phi_k - \mathbb{E}_{\mu}[\Phi_k],
\end{equation}
and the asymptotic behavior of the rescaled-centered price process
\begin{equation}
    \widetilde{P}^{(n)}(t) := \frac{P_{\lfloor n t \rfloor}}{\sqrt n },
\end{equation}
as $n$ goes to infinity.
\begin{theorem}
The series
\begin{equation}
    \sigma^2 = \mathbb{E}_{\mu}[\overline{\eta}_0^2] + 2
\sum_{n=1}^{\infty}\mathbb{E}_{\mu}[\overline{\eta}_0 \overline{\eta}_n]
\end{equation}
converges absolutely, and
the rescaled-centered price process is a Brownian motion in the limit of $n$
going to infinity. That is
    \begin{equation}
        \widetilde{P}^{(n)}(t) \stackrel{n\to\infty}{\longrightarrow}
\sigma B(t),
    \end{equation}
where $(B(t))$ is a standard Brownian motion.
\label{MainResult}
\end{theorem}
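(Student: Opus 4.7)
The plan is to derive the functional CLT via the standard Poisson-equation/martingale-CLT route for geometrically ergodic Markov chains, leveraging the $V$-uniform ergodicity of $(\mathbf{Y}_n)$ established in the preceding lemma. First I would note that the increment $\eta_n = \Phi(\mathbf{Y}_n, \mathbf{Y}_{n-1})$ is uniformly bounded: the finite moving frame forces $A^{-1}(0), A^{-1}(q), B^{-1}(0), B^{-1}(q) \in \{1,\dots,N+1\}$, so \eqref{midPriceIncrement} yields $|\eta_n| \leq (N+1)\Delta P$. Because $\eta_n$ depends on two consecutive states, I would work with the \emph{pair chain} $\widetilde{\mathbf{Y}}_n := (\mathbf{Y}_{n-1}, \mathbf{Y}_n)$, which is again Markov, admits the additive test function $\widetilde V(\mathbf{y}_1,\mathbf{y}_2) := V(\mathbf{y}_1) + V(\mathbf{y}_2)$, and inherits $V$-uniform ergodicity with stationary distribution $\tilde\mu$; on this pair chain, $\overline\Phi$ becomes a bounded functional of the current pair-state alone.

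Second, I would exploit the $V$-uniform ergodicity to obtain geometric decay of correlations,
\begin{equation}
\bigl|\mathbb{E}_{\tilde\mu}[\overline\Phi(\widetilde{\mathbf{Y}}_0)\,\overline\Phi(\widetilde{\mathbf{Y}}_n)]\bigr| \;\leq\; C\,r^n, \qquad r<1,
\end{equation}
which gives at once the absolute convergence of the series defining $\sigma^2$. The same geometric bound makes the Neumann series $\hat\Phi := \sum_{k=0}^{\infty} \tilde M^{k} \overline\Phi$ (where $\tilde M$ is the pair-chain transition operator) converge in sup-norm to a bounded solution of the Poisson equation $\hat\Phi - \tilde M \hat\Phi = \overline\Phi$. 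Telescoping then yields the decomposition
\begin{equation}
\sum_{k=1}^{n} \overline\eta_k \;=\; \underbrace{\sum_{k=1}^{n}\bigl[\hat\Phi(\widetilde{\mathbf{Y}}_k) - \tilde M \hat\Phi(\widetilde{\mathbf{Y}}_{k-1})\bigr]}_{=:\,\mathcal{M}_n} \;+\; \tilde M\hat\Phi(\widetilde{\mathbf{Y}}_0) - \tilde M\hat\Phi(\widetilde{\mathbf{Y}}_n),
\end{equation}
in which $\mathcal{M}_n$ is a square-integrable martingale (with respect to the natural filtration) whose increments are stationary, ergodic and bounded under $\tilde\mu$, and the boundary term is $O(1)$ uniformly and hence contributes $o(\sqrt{n})$ after rescaling.

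Third, I would invoke the standard martingale FCLT for stationary ergodic martingale-difference sequences (as in \cite{Billingsley2}) to obtain $n^{-1/2}\mathcal{M}_{\lfloor nt \rfloor} \to \sigma B(t)$ in Skorokhod topology, with $\sigma^2 = \mathbb{E}_{\tilde\mu}[(\hat\Phi - \tilde M\hat\Phi)^2]$; a direct algebraic manipulation using the Poisson equation and stationarity identifies this $\sigma^2$ with $\mathbb{E}_{\mu}[\overline\eta_0^2] + 2\sum_{n\geq 1}\mathbb{E}_{\mu}[\overline\eta_0\overline\eta_n]$. The extension from an arbitrary initial condition to the stationary one follows because the total-variation distance to $\tilde\mu$ decays geometrically, so the initial law is forgotten on time scales far shorter than $\sqrt n$. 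The main obstacle I anticipate is technical rather than conceptual: one has to check carefully that the pair chain $\widetilde{\mathbf{Y}}_n$ inherits not only the geometric drift inequality but also the aperiodicity and $\psi$-irreducibility needed for $V$-uniform ergodicity, and that moving from a single-state functional to a two-state functional does not disturb the Poisson-equation construction---once this bookkeeping is carried out, the rest of the argument reduces to standard Meyn--Tweedie machinery.
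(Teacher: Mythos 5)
Your argument is essentially correct but takes a genuinely different route from the paper's. The paper stays entirely within the mixing framework: from the $V$-uniform ergodicity of $(\mathbf{Y}_n)$ it extracts, via Theorem 16.1.5 of \cite{MeynTweedieBook}, a geometric bound $|\mathbb{E}_{\mu}[\overline{h}(\mathbf{Y}_k)\overline{g}(\mathbf{Y}_{k+n})]| \leq R_2\rho^n$ for the stationary chain, i.e.\ a geometric mixing condition, and then invokes Theorems 19.2 and 19.3 of \cite{Billingsley2} on functions of mixing processes to obtain in one stroke the absolute convergence of the series for $\sigma^2$, its identification as the asymptotic variance, and the FCLT in $D[0,\infty)$; the fact that $\eta_n$ depends on two consecutive states is absorbed by those theorems, with no need for your pair chain. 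You instead run the Gordin/Poisson-equation martingale approximation: pass to the pair chain, solve $\hat\Phi - \tilde M\hat\Phi = \overline\Phi$ by a Neumann series, decompose the centered price into a stationary ergodic martingale plus a telescoping remainder, and apply the martingale FCLT. Both routes rest on the same two ingredients the paper isolates (boundedness of $\eta_n$ by the finite frame, and the geometric ergodicity of the augmented chain); yours is more self-contained and yields the variance formula by explicit algebra, while the paper's is shorter because it outsources the hard analysis to Billingsley. One technical overstatement in your write-up needs repair: since the chain is only $V$-uniformly ergodic, not uniformly (Doeblin) ergodic, the bound you actually get is $|\tilde M^k\overline\Phi(\mathbf{y})| \leq \|\overline\Phi\|_{\infty} R\rho^k \widetilde V(\mathbf{y})$, so the Neumann series converges pointwise with $|\hat\Phi(\mathbf{y})| \leq C\widetilde V(\mathbf{y})$, not in sup-norm; the martingale increments are therefore not bounded and the boundary term is $O(\widetilde V(\widetilde{\mathbf{Y}}_n))$, not $O(1)$ uniformly. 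This is fixable: square-integrability of $\hat\Phi$ under the stationary law follows because the exponential test function $z^{\varphi(\mathbf{x})}$ satisfies the drift condition for every $z>1$ (so one may work with $z^2$ and get $\mathbb{E}_{\mu}[V^2]<\infty$), and the remainder term divided by $\sqrt{n}$ vanishes in probability by stationarity and these moment bounds; you must also verify, as you note, $\psi$-irreducibility and aperiodicity of the augmented and pair chains, which the paper itself leaves implicit. With those repairs your proof is a valid alternative to the paper's.
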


\begin{proof}
The idea is to apply the functional central limit theorem for (stationary and
ergodic) sequences of \emph{weakly dependent} random variables with \emph{finite
variance}. Firstly, we note that the variance of the price increments $\eta_n$ is
finite since it is bounded by $K+1$. Secondly, the $V$-uniform ergodicity of
$(\mathbf{Y}_n)$ is equivalent to
\begin{equation}
        ||M^n(\mathbf{x},.) - \mu(.)|| \leq R \rho^n V(\mathbf{x}), n
\in \mathbb{N},
\end{equation}
for some $R<\infty$ and $\rho<1.$
 This implies thanks to theorem 16.1.5 in \cite{MeynTweedieBook} that for any
$g^2, h^2 \leq V$, $k,n \in \mathbb{N}$, and any initial condition $\mathbf{y}$
\begin{equation}
    |\mathbb{E}_{\mathbf{y}}[ g(\mathbf{Y}_k) h(\mathbf{Y}_{n+k})] -
\mathbb{E}_{\mathbf{y}}[g(\mathbf{Y}_k)]
\mathbb{E}_{\mathbf{y}}[g(\mathbf{Y}_k)]| \leq R \rho^n [1+\rho^k
V(\mathbf{y})],
\end{equation}
where $\mathbb{E}_{\mathbf{y}}[.]$ means
$\mathbb{E}[.|\mathbf{Y}_0=\mathbf{y}]$.
This in turn implies
\begin{equation}
    | \mathbb{E}_{\mathbf{y}}[\overline{h}(\mathbf{Y}_k)
\overline{g}(\mathbf{Y}_{k+n})] | \leq R_1 \rho^n [1+\rho^k V(\mathbf{y})]
    \label{geometricx}
\end{equation}
for some $R_1 < \infty$, where $\overline{h} = h - \mathbb{E}_{\mu}[h],
\overline{g} = g - \mathbb{E}_{\mu}[g]$.
By taking the expectation over $\mu$ on both sides of \eqref{geometricx} and
noting that $\mathbb{E}_{\mu}[V(\mathbf{Y}_0)]$ is finite by theorem 14.3.7 in
\cite{MeynTweedieBook}, we get
\begin{equation}
    |\mathbb{E}_{\mu}[\overline{h}(\mathbf{Y}_k)
\overline{g}(\mathbf{Y}_{k+n})]| \leq R_2 \rho^n =: \rho(n), k,n \in \mathbb{N}.
\end{equation}
Hence the stationary version of $(\mathbf{Y}_n)$ satisfies a \emph{geometric
mixing condition}, and in particular
\begin{equation}
\sum_{n}\rho(n)<\infty.
\label{integrable}
\end{equation}
Theorems 19.2 and 19.3 in \cite{Billingsley2} on functions of mixing
processes\footnote{See also theorem 4.4.1 in \cite{Whitt} and discussion
therein.} let us conclude that
\begin{equation}
    \sigma^2 := \mathbb{E}_{\mu}[\overline{\eta}_0^2] + 2
\sum_{n=1}^{\infty}\mathbb{E}_{\mu}[\overline{\eta}_0 \overline{\eta}_n]
\label{asymptoticvariance}
\end{equation}
is well-defined---the series in \eqref{asymptoticvariance} converges
absolutely---and coincides with the asymptotic variance
\begin{equation}
        \lim_{n \rightarrow \infty}\frac{1}{n}\mathbb{E}_{\mu}\left[
\sum_{k=1}^n{(\overline{\eta}_k)^2} \right] = \sigma^2.
\end{equation}
Moreover
\begin{equation}
        \widetilde{P}^{(n)}(t) \stackrel{n\to\infty}{\longrightarrow}
\sigma B(t), \label{Brownian}
\end{equation}
where $(B(t))$ is a standard Brownian motion. The convergence in
\eqref{Brownian} happens in $D[0, \infty)$, the space of $\mathbb{R}$-valued
c\`adl\`ag functions, equipped with the Skorohod topology.
\end{proof}

\begin{remark}
In the large-scale limit, the mid-price $P$, the ask price $P^A = P +
\frac{S}{2}$, and the bid price $P^B = P - \frac{S}{2}$ converge to the same
process $(\sigma B(t))$.
\end{remark}

\begin{remark}
Theorem \ref{MainResult} is also true with constant
cancellation rates under condition \eqref{condition}. In this case the result
holds both in event time and physical time. Indeed, let $(N(t))_{t \in
\mathbb{R}_+}$ denote a Poisson process with intensity $\Lambda =
\lambda^{M^{\pm}} + \Lambda^{L^{\pm}} +  \sum_{i=1}^K \lambda^{C^{\pm}}_i$. The
price process in physical time $(P_c(t))_{t \in \mathbb{R}_+}$ can be linked to
the price in event time $(P_n)_{n \in \mathbb{N}}$ by
\begin{equation}
P_c(t)=P_{N(t)}.
\end{equation}
Then
\begin{eqnarray}
\frac{P_{\lfloor k t \rfloor}}{\sqrt{k}} & \stackrel{k\to\infty}
{\longrightarrow} \sigma B(t)\; \text{ as in theorem \ref{MainResult}, } \\
&\text{and since } \frac{N(u)}{\Lambda u} \stackrel{u\to\infty}
{\longrightarrow} 1 \text{ a.s., } \notag \\
\frac{P_c( k t)}{\sqrt{k}}   & = \frac{P_{N(kt)}}{\sqrt{k}}
\stackrel{k\to\infty}{\sim} {\rightarrow}  \frac{P_{\lfloor \Lambda kt
\rfloor}}{\sqrt{k}}   \stackrel{k\to\infty}{\longrightarrow}  \sqrt{\Lambda}
\sigma B(t).
\end{eqnarray}
\end{remark}

\begin{remark}
\emph{Yet another specification of the cancellation process.} Another interesting specification of the cancellation process $(C_i(t))$ is to assume that the arrival rate is constant (for each $i$) but the canceled \emph{volume} is proportional to the queue size $|\mathbf{X}_i|$.  In this case, the treatments of the continuous time chain and its embedded discrete-time counterpart are equivalent, and theorems \ref{theo1}--\ref{MainResult} can be obtained in an analogous manner to the proofs in this section.
\end{remark}

\section{Numerical Example}
\label{NumericalExample}

In order to gain a better intuitive understanding of the ``mechanics'' of the model, we sketch in Algorithm \ref{algo1} below the simulation procedure in pseudo-code (See also \cite{GatheralOomen} for a similar description.) For simplicity, we take a symmetric order book model. We also let (usual notations):
\begin{eqnarray}
\boldsymbol\lambda^L &:=& \left(\lambda^{L}_1, \dots, \lambda^{L}_K \right),\\
 \Lambda^L &:=& \sum_{i=1}^K \lambda^{L}_i, \\
\boldsymbol\lambda^C(\mathbf{a}) &:=& \left(\lambda^{C}_1 a_1, \dots, \lambda^{C}_K a_K\right),\\
\Lambda^C(\mathbf{a}) &:=& \sum_{i=1}^K \lambda_i^{C} a_i,\\
\boldsymbol\lambda^C(\mathbf{b}) &:=& \left(\lambda^{C}_1 |b_1|, \dots, \lambda^{C}_K |b_K|\right),\\
\Lambda^C(\mathbf{b}) &:=& \sum_{i=1}^K \lambda_i^{C} |b_i|,\\
\Lambda(\mathbf{a}, \mathbf{b}) &:=& 2 (\lambda^M + \Lambda^L) + \Lambda^C(\mathbf{a}) + \Lambda^C(\mathbf{b}).
\end{eqnarray}
In order to put the simulation results and the data on the same footing, we relax the assumption of constant order sizes; we draw the order volumes from lognormal distributions.
\begin{algorithm}
    \begin{algorithmic}[1]
    \REQUIRE \emph{Model~parameters---} 
    Arrival rates: $\lambda^M, \{\lambda^L_i\}_{i \in \{1, \dots K\}}, \{\lambda^C_i\}_{i \in \{1, \dots K\}}$, order book size: $K$, reservoirs: $a_{\infty}$, $b_{\infty}$, volume distribution parameters: $(v^M, s^M), (v^L, s^L), (v^C, s^C)$. \\
    	\emph{Simulation Parameters---} Number of time steps: $N$.\\
    	\emph{Initialization---} $t \leftarrow 0$, $\mathbf{X}(0) \leftarrow \mathbf{X}_{\text{init}}$.
        \FOR {time step $n = {1, \dots, N}$,}
            \STATE Compute the best bid $p^B$ and best ask $p^A$.
            \STATE Compute $\displaystyle \Lambda^{C}(\mathbf{b})=\sum_{i=1}^K \lambda_i^{C} |b_i|,$ i.e. the weighted sum of shares at price levels from $p^A - K$ to $p^A - 1$.
            \STATE Compute $\displaystyle \Lambda^{C}(\mathbf{a})=\sum_{i=1}^K \lambda_i^{C} a_i.$
            \STATE Draw the waiting time $\tau$ for the next event from an exponential distribution with parameter $$\Lambda(\mathbf{a}, \mathbf{b}) = 2 (\lambda^M + \Lambda^L) + \Lambda^C(\mathbf{a}) + \Lambda^C(\mathbf{b}).$$
            \STATE Draw a new event according to the probability vector
            $$\left(\lambda^M, \lambda^M, \Lambda^L, \Lambda^L, \Lambda^C(\mathbf{a}), \Lambda^C(\mathbf{b}) \right) / \Lambda(\mathbf{a}, \mathbf{b}).$$
    These probabilities correspond respectively to a buy market order, a sell market order, a buy limit order, a sell limit order, a cancellation of an existing sell order and a cancellation of an existing buy order.
            \STATE Depending on the event type, draw the order volume from a lognormal distribution with parameters $(v^M, s^M), (v^L, s^L)$ or $(v^C, s^C)$.
            \STATE If the selected event is a limit order, select the relative price level from $\{1, 2, \dots, K\}$ according to the probability vector
            $$ \left(\lambda^{L}_1, \dots, \lambda^{L}_K \right) / \Lambda^L.$$
            \STATE If the selected event is a cancellation, select the relative price level at which to cancel an order from $\{1, 2, \dots, K\}$ according to the probability vector
            $$\left(\lambda^{C}_1 a_1, \dots, \lambda^{C}_K a_K\right) / \Lambda^C(\mathbf{a}).$$

            {(or $\boldsymbol\lambda^C(\mathbf{b})/\Lambda^C(\mathbf{b})$ for the bid side.)}
            \STATE Update the order book state according to the selected event.
            \STATE Enforce the boundary conditions:
            \begin{eqnarray}
            a_i &=& a_{\infty}, i\geq K+1, \nonumber\\
            b_i &=& b_{\infty}, i \geq K+1 \nonumber.
            \end{eqnarray}
            \STATE Increment the event time $n$ by 1 and the physical time $t$ by $\tau$.
        \ENDFOR
        \remark{For the practical implementation, it is easier to work with an ``absolute'' price frame $\Delta p \times \{1 \dots L\}$ where $L \gg K$.}
    \end{algorithmic}
    \caption{Order book simulation.}
    \label{algo1}
\end{algorithm}
The parameters of the model are estimated from tick by tick data as detailed in \ref{ModelParameters}. For concreteness\footnote{The results are qualitatively the same for all CAC 40 stocks.}, we use the parameters of the stock SCHN.PA (Schneider Electric) in March 2011 for the plots. They are summarized in tables \ref{ModelParams1} and \ref{ModelParams2}.

Figure \ref{AverageDepthProfile} represents the average depth profile, that is, the average number of outstanding shares at a distance of $i$ ticks from the best opposite price. The agreement between the simulation and the data is fairly good (See panel $(a)$ of figure \ref{Cac40Panel} for a cross-sectional view on CAC 40 stocks.) We also plot the distribution of the spread in figure \ref{SpreadDistribution}. Note that the simulated distribution is \emph{tighter} than the actual one (this is systematic and is documented in panel $(b)$ of figure \ref{Cac40Panel}.) Figure \ref{Autocorr} shows the fast decay of the autocorrelation function of the price increments. Note the high negative autocorrelation of simulated trade prices relatively to the data. In accordance with the theoretical analysis, figures \ref{PricePath}--\ref{HistPriceInc} illustrate the asymptotic normality of price increments.

The \emph{signature plot} of the price time series is defined as the variance of price increments at lag $h$ normalized by the lag, that is
\begin{equation}
    \sigma_h^2 := \frac{\mathbb{V}\left[ P(t+h) - P(t) \right]}{h}.
\end{equation}
This function measures the variance of price increments per time unit. It is interesting in that it shows the transition from the variance at small time scales where micro-structure effects dominate, to the long-term variance.
By theorem \ref{MainResult}\footnote{Strictly spreaking, we proved the result in event-time.}
\begin{equation}
\lim_{h\rightarrow\infty} \sigma_h^2 = \sigma^2, \text{ for some fixed value $\sigma.$}
\end{equation}
We verify this numerically in figure \ref{SignaturePlot}. Two remarks are in order regarding the signature plot:

\smallskip \emph{Long-term variance---} The simulated long-term variance is systematically lower than the variance computed from the data (This is documented in panel $(c)$ of figure \ref{Cac40Panel}.) Intuitively, the depth of the order book is expected to increase from the best price towards the center of the book. In the absence of autocorrelation in trade signs, this would cause prices to wander less often far away from the current best as they hit a higher ``resistance''. We also suspect that actual prices exhibit locally more ``drifting phases'' than in a (symmetric) Markovian model where the expected price drift is null at all times. An interesting analysis of a simple order book model that allows time-varying arrival rates can be found in \cite{ChalletStinchcombeQuantFi}.

 \smallskip \emph{Short-term variance---} The signature plot predicted by the model is too high at short time scales relative to the asymptotic variance, especially for traded prices. This is classically known to be due to bid-ask bounce. It is however remarkable that the signature plot of actual trade prices looks much flatter compared to the simulation (See figure \ref{SignaturePlot}.) This was discovered and discussed in detail by Bouchaud et al. in \cite{BouchaudFlucAndResp}, and Lillo and Farmer in \cite{LilloFarmer} (See also \cite{FarmerGerig} and \cite{BouchaudFarmerLillo}.) They note that actual order signs exhibit positive long-ranged correlations. They also note that actual prices are diffusive---the signature plot is flat---even at small time scales. They solve this apparent paradox by showing that diffusivity results from two opposite effects: autocorrelation in trade signs induces persistence in the prices, just at the exact amount to counterbalance the mean reversion induced by the liquidity stored in the order book. 
This subtle equilibrium between liquidity takers and liquidity providers
which guarantees price diffusivity at short lags,
is not accounted for by the bare Markovian order book model we study, and one can speak about anomalous diffusion at short time scales for Markovian order book models \cite{SmithFarmer}. Because of the absence of positive autocorrelation in trade signs in the model, this effect is magnified when one looks at trades. The next paragraph elaborates on this point.

\begin{figure}
\begin{center}
\includegraphics[width=\textwidth]{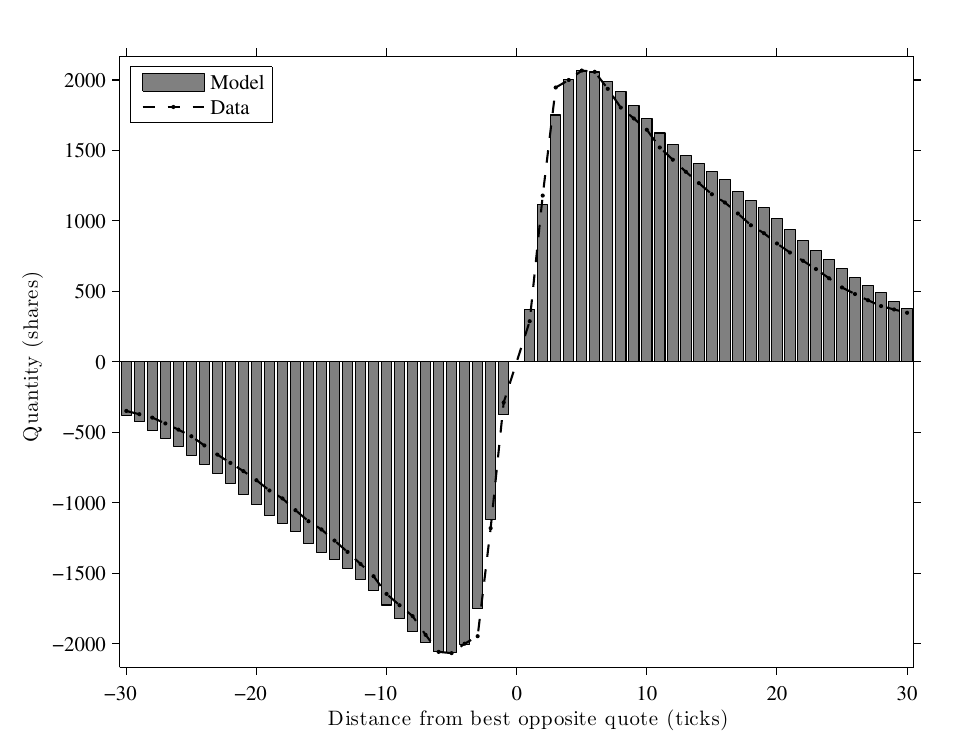}
\caption{\label{AverageDepthProfile}Average depth profile. Simulation parameters are summarized in tables \ref{ModelParams1} and \ref{ModelParams2}.}
\end{center}
\end{figure}

\begin{figure}
\begin{center}
\includegraphics[width=\textwidth]{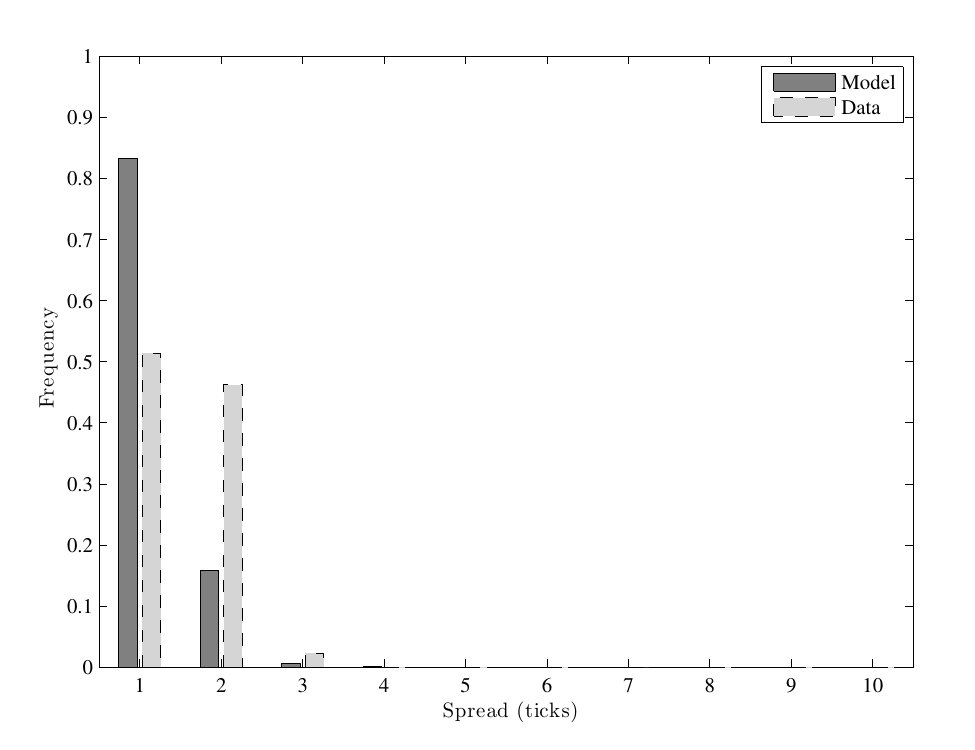}
\end{center}
\caption{\label{SpreadDistribution}Probability distribution of the spread. Note that the model (dark gray) predicts a tighter spread than the data.}
\end{figure}

\begin{figure}
\begin{center}
\includegraphics[width=\textwidth]{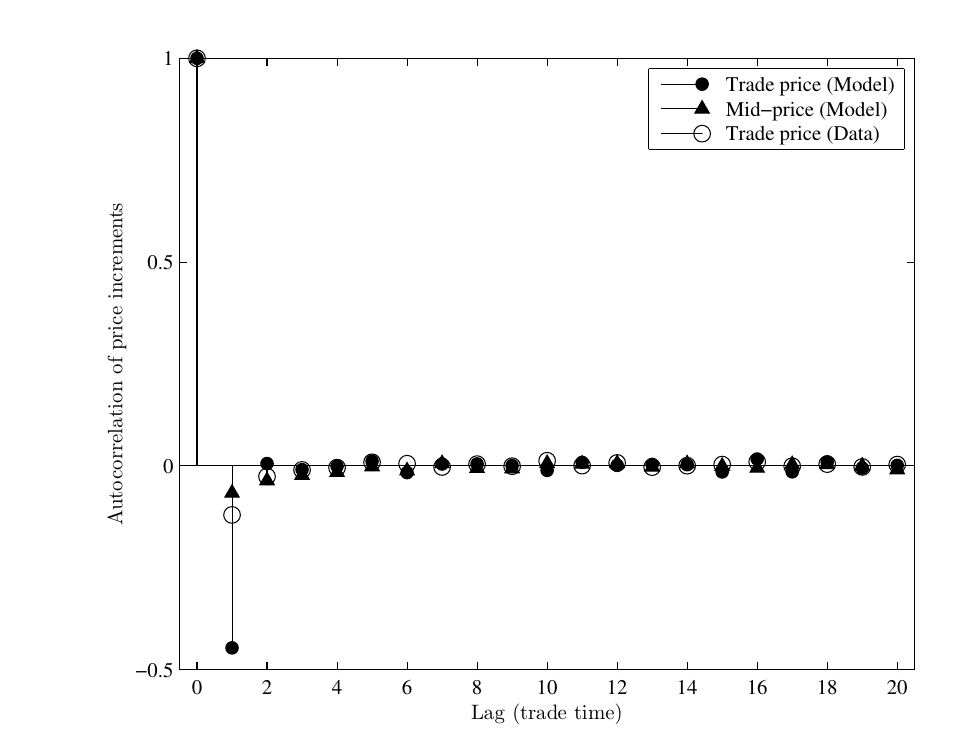}
\caption{\label{Autocorr}Autocorrelation of price increments. This figure shows the fast decay of the autocorrelation function, and the large negative autocorrelation of trades at the first lag.}
\end{center}
\end{figure}

\begin{figure}
\begin{center}
\includegraphics[width=\textwidth]{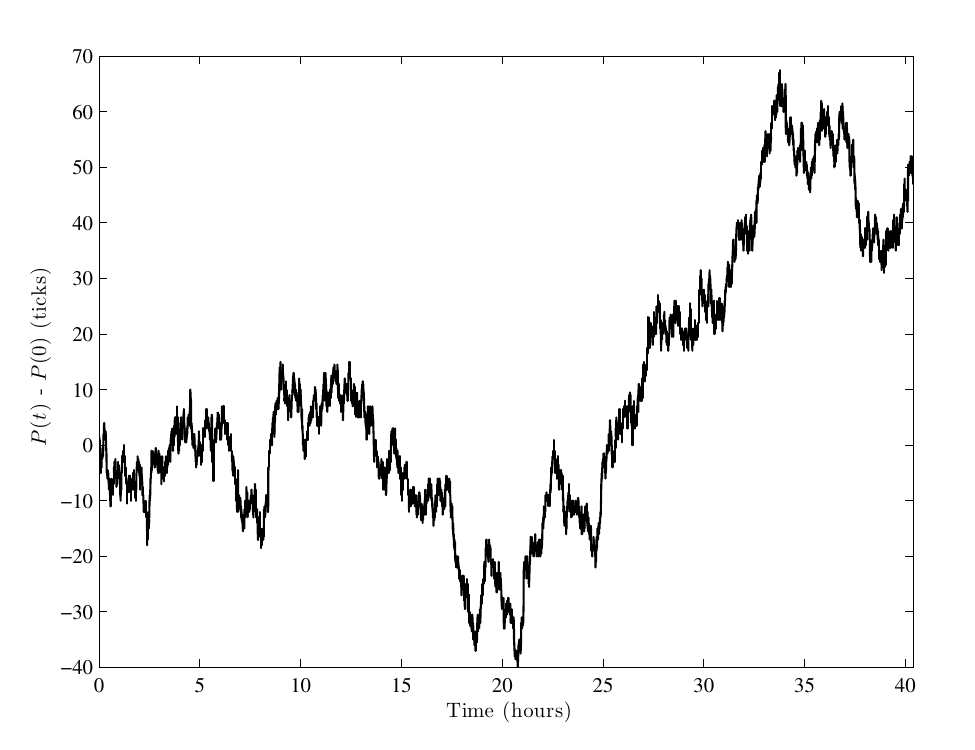}
\end{center}
\caption{\label{PricePath}Price sample path. At large time scales, the price process is a Brownian motion.}
\end{figure}

\begin{figure}
\begin{center}
\includegraphics[width=\textwidth]{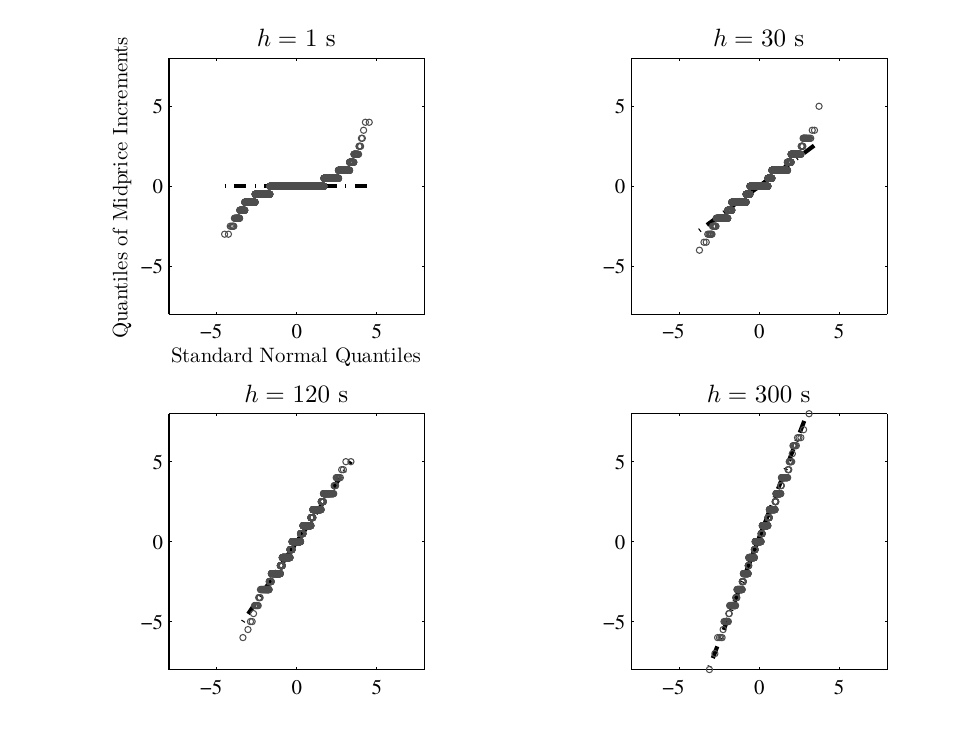}
\caption{\label{QqPlot}Q-Q plot of mid-price increments. $h$ is the time lag in seconds. This figure illustrates the aggregational normality of price increments.}
\end{center}
\end{figure}

\begin{figure}
\begin{center}
\includegraphics[width=\textwidth]{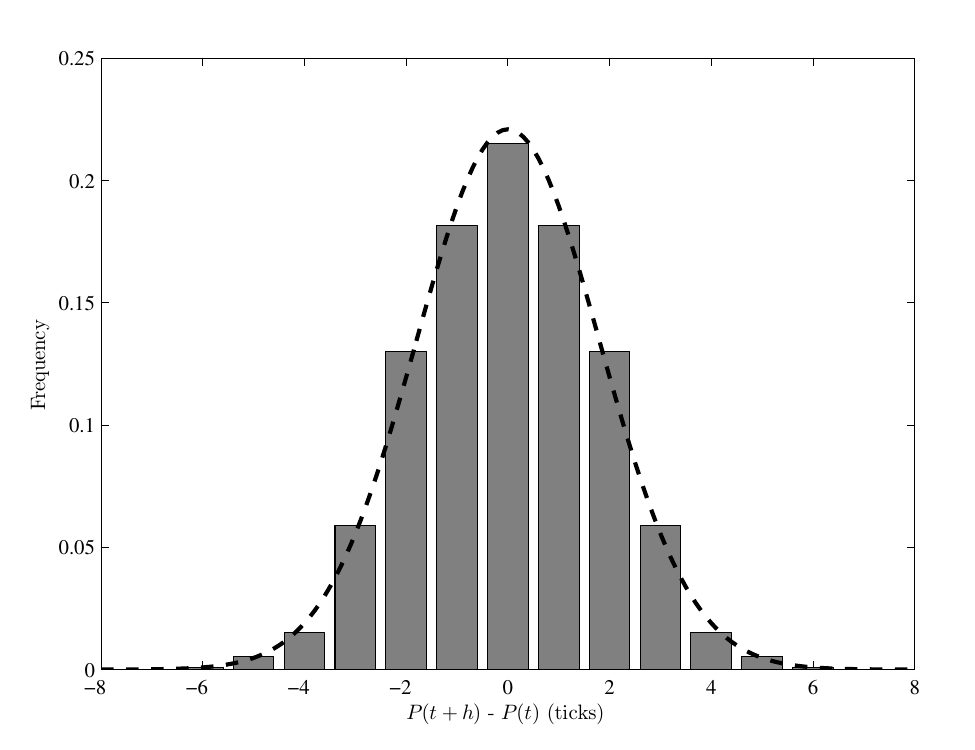}
\end{center}
\caption{\label{HistPriceInc}Probability distribution of price increments.  Time lag $h = 1000$ events.}
\end{figure}

\begin{figure}
    \centering
        \subfloat[Trade time.]{
    \label{fig:image_2}
    \includegraphics[width=0.85\textwidth]{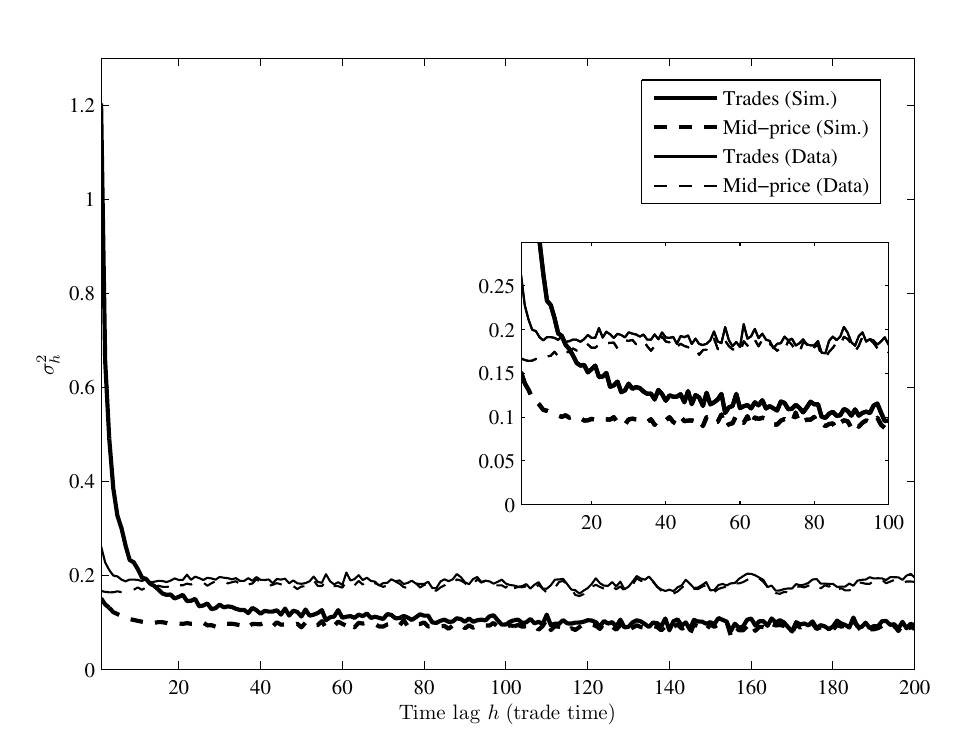}}
    
    \subfloat[Calendar time.]{
    \label{fig:image_1}
    \includegraphics[width=0.85\textwidth]{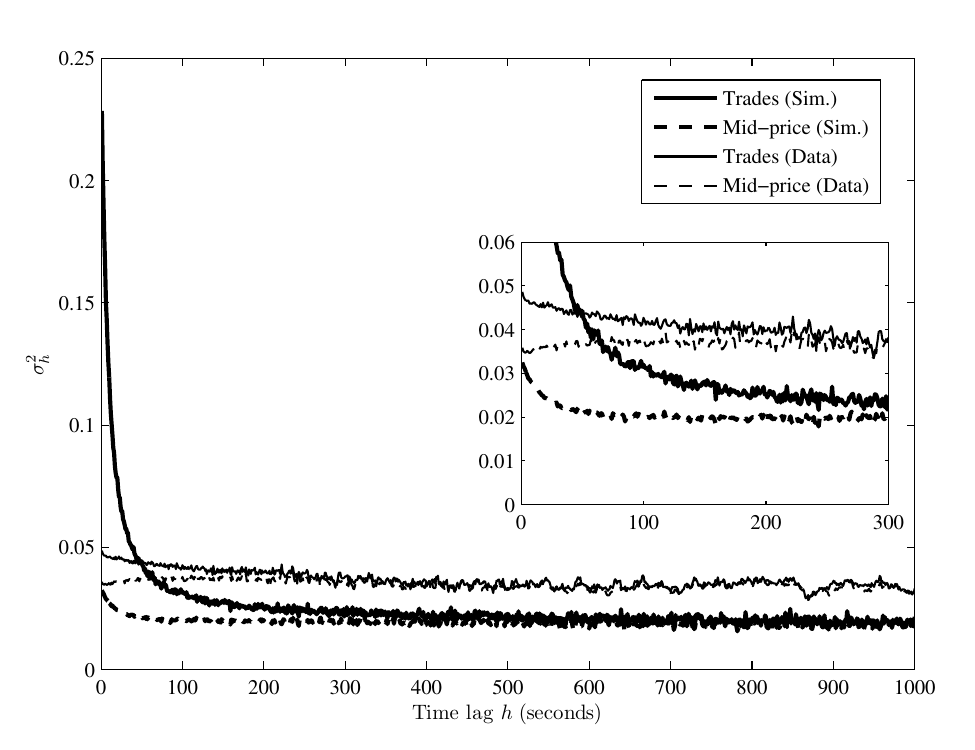}}    
    \centering
    \caption{\label{SignaturePlot}Signature plot: $\displaystyle \sigma_h^2 := {\mathbb{V}\left[ P(t+h) - P(t) \right]}/{h}.$ $y$ axis unit is $\text{tick}^2 \text{ per trade}$ for panel $(a)$ and $\text{tick}^2 . \text{second}^{-1}$ for panel $(b).$ We used a 1,000,000 event simulation run for the model signature plots. Data signature plots are computed separately for each trading day $[9:30$--$14:00]$ then averaged across 23 days. For calendar time signature plots, prices are sampled every second using the last tick rule. The inset is a zoom-in.}
\end{figure}

\smallskip \noindent \textbf{Anomalous diffusion at short time scales.} A qualitative understanding of the discrepancy between the model and the data signature plots at short time scales can be gleaned with the following heuristic argument. In what follows, we reason in trade time $t$. Denote by $P^{Tr}(t)$ the price of the trade at time $t$, and $\alpha(t)$ its sign:
\begin{equation}
\alpha(t)  =  1, \;   \text{for a buyer initiated trade, i.e. a buy market order,}
\end{equation}
and,
\begin{equation}
\alpha(t)  =   -1, \;  \text{for a seller initiated trade, i.e. a sell market order.}
\end{equation}
We assume that the two signs are equally probable (symmetric model). But to make the argument valid for both the model (for which successive trade signs are independent) and the data (for which trade signs exhibit long memory) we do not assume independence of successive trade signs. Let also for a quantity $Z$
\begin{equation}
    \Delta Z(t):= Z(t+1) - Z(t).
\end{equation}
We have by definition
\begin{equation}
    P^{Tr}(t) = P(t^-) + \frac{1}{2} \alpha(t) S(t^-),
    \label{Trade2Mid}
\end{equation}
where $P(t^-)$ and $S(t^-)$ are respectively the prevailing mid-price and spread just before the trade. From equation \eqref{Trade2Mid}
\begin{eqnarray}
{\sigma_1^{Tr}}^2 &:= & \mathbb{V}[\Delta P^{Tr}(t)] \nonumber \\
& = & \mathbb{E}\left[ \left( \Delta P^{Tr}(t) \right) ^2 \right] \nonumber  \\
& = & \mathbb{E} \left[ \left( \Delta P(t^-) \right) ^2  \right] \nonumber \\
& + & \mathbb{E} \left[ \Delta P(t^-) \Delta (\alpha(t) S(t^-)) \right] \nonumber \\
& +& \frac{1}{4} \mathbb{E} \left[ \left( \Delta (\alpha(t) S(t^-)) \right)^2 \right].
\end{eqnarray}
The first term in the r.h.s. is the variance of mid-price increments $\sigma_1^2$.
The second term represents the covariance of mid-price increments and the trade sign (weighted by the spread) and we assume it is negligible\footnote{This amounts to neglecting the correlation between trade signs and mid-quote movements, which is justified by the dominance of cancellations and limit orders in comparison to market orders in order book data.}.
Let us focus on the third term:
\begin{equation}
    \Delta (\alpha(t) S(t^-)) = \alpha(t+1) \Delta S(t^-) + S(t^-)  \Delta \alpha(t).
\end{equation}
Then
 \begin{eqnarray}
    \mathbb{E} \left[ \left( \Delta (\alpha(t) S(t^-)) \right)^2 \right] & = & \mathbb{E} \left[ (\Delta \alpha(t))^2 \right] \mathbb{E} \left[ S(t^-)^2 \right] \nonumber \\
    & + & 2 \mathbb{E} \left[  \alpha(t+1) \Delta S(t^-)  S(t^-)  \Delta \alpha(t) \right] \nonumber  \\
    & + &    \mathbb{E} \left[ \alpha(t+1)^2  \right] \mathbb{E} \left[   (\Delta S(t^-)) ^2\right].
    \label{IntermediateEq}
\end{eqnarray}
Again, we neglect the cross term\footnote{This time, we are neglecting the correlation between trade signs and spread movements.} in the r.h.s. and we are left with
\begin{eqnarray}
    \mathbb{E} \left[ \left( \Delta (\alpha(t) S(t^-)) \right)^2 \right] & \approx & \mathbb{E}\left[ (\Delta \alpha(t))^2 \right] \mathbb{E} \left[ S(t^-)^2 \right] \nonumber \\
& + &   \mathbb{E} \left[   (\Delta S(t^-)) ^2\right].
\end{eqnarray}
But
\begin{eqnarray}
    \mathbb{E}\left[ (\Delta \alpha(t))^2 \right] & = & \mathbb{E} \left[ \alpha(t+1)^2  \right] + \mathbb{E} \left[ \alpha(t)^2  \right]
- 2 \mathbb{E} \left[ \alpha(t) \alpha(t+1)  \right] \nonumber \\
& = & 2 \left( 1 - \rho_1(\alpha) \right),
\end{eqnarray}
where $\rho_1(\alpha)$ is the autocorrelation of trade signs at the first lag.

Finally\footnote{More generally, after $n$ trades: \begin{equation}
    {\sigma_n^{Tr}}^2 \approx  {\sigma_n}^2 + \frac{1}{2 n} \left( 1 - \rho_n(\alpha) \right)
    {\mathbb{E} \left[ S(t^-)^2 \right]}.
\label{ShortTermDiffGeneral}
\end{equation}}:
\begin{equation}
    {\sigma_1^{Tr}}^2 \approx  {\sigma_1}^2 + \frac{1}{2} \left( 1 - \rho_1(\alpha) \right) \mathbb{E} \left[ S(t^-)^2 \right] + \frac{1}{4} \mathbb{E} \left[   (\Delta S(t^-)) ^2\right].
\label{ShortTermDiff}
\end{equation}

Two effects are clear from equation \eqref{ShortTermDiff}:
\begin{enumerate}
    \item The trade price variance at short time scales is larger than the mid-price variance,
    \item Autocorrelation in trade signs dampens this discrepancy. This partially explains\footnote{Interestingly, although the arguments that led to \eqref{ShortTermDiff} are rather qualitative, a back of the envelope calculation with $\mathbb{E}\left[ S^2 \right] \in [1,9]$, gives a difference ${\sigma^{Tr}}^2 - \sigma^2$ in the range $[0.5, 4.5]$; which has the same order of magnitude of the values obtained by simulation.} why the trades signature plot obtained from the data is flatter than the model predictions: $\rho_1(\alpha)_{model} = 0$, while $\rho_1(\alpha)_{data} \approx 0.6.$
\end{enumerate}

From a modeling perspective, a possible solution to recover the diffusivity even at very short time scales, is to incorporate long-ranged correlation in the order flow. Toth et al. \cite{Toth} have investigated numerically this route using a ``$\epsilon$-intelligence'' order book model. In this model, market orders signs are long-ranged correlated, that is, in trade time
\begin{equation}
\rho_n(\alpha) = \mathbb{E}\left[ \alpha(t+n) \alpha(t) \right] \propto n^{-\gamma}, \qquad \gamma \in ]0, 1[.
\end{equation}
And the size of incoming market orders is a fraction $f$ of the volume displayed at the best opposite quote, with $f$ drawn from the distribution
\begin{equation}
P_{\xi}(f) = \xi (1-f)^{\xi-1},
\end{equation}
They show that, by fine tuning the additional parameters $\gamma$ and $\xi$, one can ensure the diffusive behavior of the price both at mesoscopic ($\approx$ a few trades) and macroscopic ($\approx$ hundred trades) time scales\footnote{Note that Toth. el al. \cite{Toth} model the ``latent order book'', not the actual observable order book. The former represents the \emph{intended} volume at each price level $p$, that is, the volume that would be revealed should the price come close to $p$. So that the interpretation of their parameters, in particular the expected lifetime $\tau_{\text{life}}$ of an order, does not strictly match ours.}.

\section{Conclusions}
\label{Conclusion}

This paper analyzes a simple Markovian order book model, in
which elementary changes in the price and spread processes are explicitly linked
to the instantaneous shape of the order book and the order flow parameters.

Two basic properties were investigated: the ergodicity of the order book and the large-scale limit of the price process. The first property, which we answered positively, is desirable in that it assures the stability of the order book in the long run, and gives a theoretical underpinning to statistical measurements on order book data. The scaling limit of the price process is, as anticipated, a Brownian motion. A key ingredient in this result is the convergence of the order book to its stationary state at an exponential rate, a property equivalent to a geometric mixing condition satisfied by the stationary version of the order book. This short memory effect, plus a constraint on the variance of price increments guarantee a diffusive limit at large time scales. Our assumptions are independent Poissonian order flows, proportional cancellation rates, and the presence of two reservoirs of liquidity $K$ ticks away from the best quotes to guarantee that the spread does not 
diverge.\footnote{We believe this assumption can be relaxed under a balance condition on the arrival rates. One has however to consider an order book model with finite but \emph{ unbounded support}, and control not only the stability of the spread but also of all the gaps in the book.} 

We believe the results hold for a wide class of Markovian order book models: In general, one can state that price increments in a \emph{stable} Markovian order book model are aggregationally Gaussian\footnote{Rigorously, the convergence to the stationary state has to happen \emph{fast enough}. That is, with an integrable convergence rate $\rho(n)$ as in \eqref{integrable}.}.

In a sense, this could offer a mathematical justification to the Bachelier
model of asset prices, from a market microstructure perspective.
In reality, the picture is however more subtle: even if the price process is asymptotically diffusive, at short time scales, the model produces stronger anti-correlation in traded prices than what is actually observed in the data. At those time scales, price diffusivity is arguably the result of a balance between persistent liquidity taking and anti-persistent liquidity providing.

We believe however that the approach presented here is interesting for clearly
 identifying conditions under which the asymptotic normality of price increments holds; and more importantly, for introducing a set of mathematical tools for further investigating the price dynamics in more sophisticated stochastic order book models. Indeed, using the same techniques, we are studying extensions of our results to the case of mutually exciting---and therefore dependent---order flows (point 1 below). This will be published elsewhere.

At this stage of development, our work can naturally be extended in several ways. In the following lines, we suggest some possible avenues to explore. 
	 
	  First of all, actual order flows exhibit non-negligible \emph{cross dependences}. As documented in \cite{MuniToke}, market orders excite limit orders and vice versa. A possible solution for endogenously incorporating these dependences is the use of  mutually exciting processes: 
	\begin{eqnarray}
\lambda^{M}(t) &=& \lambda^{M}(0) +  \int_0^t{ \varphi^{MM}(t-s)} dN^M(s) \nonumber \\ 
&+&  \int_0^t \varphi^{LM}(t-s) dN^L(s),
\end{eqnarray}
and,
\begin{eqnarray}
\lambda^{L}(t) &= &\lambda^{L}(0) +  \int_0^t \varphi^{LL}(t-s) dN^L(s) \nonumber \\
&+&  \int_0^t \varphi^{ML}(t-s) dN^M(s),
\end{eqnarray}
This model has the additional advantage of capturing \emph{clustering} in order arrivals (due to the self-excitation terms $\varphi^{MM}$ and $\varphi^{LL}$), and for exponentially decaying kernels\footnote{$\varphi(u)=\alpha e^{-\beta u}.$} can be cast into a Markovian setting.
	
	 Besides, \emph{long-ranged correlation in order signs} is a very important feature of the data, as discussed in section \ref{NumericalExample}. Analyzing this mathematically is more difficult since the model is no longer Markovian. 

	 Moreover, it is natural to add another source of randomness on the rates themselves, for instance
	\begin{equation}
		d\lambda(t) = \theta (\overline{\lambda}(t) -\lambda(t)) dt + \nu \sqrt{\lambda(t)} dW(t),
	\end{equation}
	where $\overline{\lambda}$ is a (deterministic) background intensity to account for the $U$-shaped daily trading activity and $\theta, \nu$ are the parameters of a CIR process. Such \emph{stochastic arrival rates}  would lead to stochastic volatility in the prices.
	
		Although we argued that the simple Markovian order book model we study is stable and asymptotically diffusive, markets do show signs of fragility quite often and large jumps do occur in actual prices. Understanding how these macroscopic \emph{jumps} (or departure from equilibrium) arise from events at the order book level, for instance via sudden evaporation of liquidity in one side of the book is much needed.
	
	Finally, richer price dynamics (e.g. fat-tailed return distributions) can be obtained using \emph{feedback loops} between the arrival rates and the price (or its volatility) as in \cite{Preis}.

These extensions may, however, render the model less amenable
to mathematical analysis, and we leave the investigation of such interesting
(but sometime difficult) 
questions for future research.

\appendix

\section{Model Parameters}
\label{ModelParameters}
\subsection*{Description of the data}
For reproducibility, we summarize in tables \ref{ModelParams1} and \ref{ModelParams2} the parameters used to obtain figures \ref{AverageDepthProfile}--\ref{SignaturePlot}. These correspond to estimating the model for the stock SCHN.PA (Schneider Electric). Our dataset consists of TRTH\footnote{Thomson Reuters Tick History.} data for the CAC 40 index constituents in March 2011 (23 trading days). We have tick by tick order book data up to $10$ price levels, and trades. A snapshot of these files is given in tables \ref{TickData} and \ref{TradesData}. In order to avoid the diurnal seasonality in trading activity (and the impact of the US market open on European stocks), we somehow arbitrarily restrict our attention to the time window $[9:30$--$14:00]$ CET.

\begin{table}
    \begin{center}
        \begin{tabular}{|c|c|c|c|c|}
                \hline
                Timestamp & Side &  Level &  Price &  Quantity \\
                \hline
                33480.158 & B & 1 & 121.1 & 480 \\
                33480.476 & B & 2 & 121.05 & 1636 \\
                33481.517 & B & 5 & 120.9 & 1318 \\
                \emph{33483.218} & \emph{B} & \emph{1} & \emph{121.1} & \emph{420} \\
                33484.254 & B & 1 & 121.1 & 556 \\
                33486.832 & A & 1 & 121.15 & 187 \\
                33489.014 & B & 2 & 121.05 & 1397 \\
                \emph{33490.473} & \emph{B} & \emph{1} & \emph{121.1} & \emph{342} \\
                \emph{33490.473} & \emph{B} & \emph{1} & \emph{121.1} & \emph{304} \\
                \emph{33490.473} & \emph{B} & \emph{1} & \emph{121.1} & \emph{256} \\
                33490.473 & A & 1 & 121.15 & 237 \\
                \hline
                \end{tabular}
        \caption{\label{TickData} Tick by tick data file sample. Note that the field ``Level'' does not necessarily correspond to the distance in ticks from the best opposite quote as there might be gaps in the book. Lines corresponding to the trades in table \ref{TradesData} are highlighted in \emph{italics}. }
    \end{center}
\end{table}

\begin{table}
    \begin{center}
        \begin{tabular}{|c|c|c|}
        \hline
                    Timestamp & Last & Last quantity \\
                    \hline
                    33483.097 & 121.1 & 60 \\
                    33490.380 & 121.1 & 214 \\
                    33490.380 & 121.1 & 38 \\
                    33490.380 & 121.1 & 48 \\
                    \hline
            \end{tabular}
            \caption{\label{TradesData}Trades data file sample.}
    \end{center}
\end{table}

\subsection*{Trades and tick by tick data processing}
Because one cannot distinguish market orders from cancellations in tick by tick data, and since the timestamps of the trades and tick by tick data files are asynchronous, we use a matching procedure to reconstruct the order book events. In a nutshell, we proceed as follows for each stock and each trading day:
\begin{enumerate}
    \item Parse the tick by tick data file to compute order book state variations:
     \begin{itemize}
        \item If the variation is positive (volume at one or more price levels has increased), then label the event as a limit order.
        \item If the variation is negative (volume at one or more price levels has decreased), then label the event as a ``likely market order''.
        \item If no variation---this happens when there is just a renumbering in the field ``Level'' that does not affect the state of the book---do not count an event.
    \end{itemize}
    \item Parse the trades file and for each trade:
        \begin{enumerate}
            \item Compare the trade price and volume to likely market orders whose timestamps are in $[t^{Tr}-\Delta t, t^{Tr}+\Delta t]$, where $t^{Tr}$ is the trade timestamp and $\Delta t$ is a predefined time window\footnote{We set $\Delta t = 3 \; \text{s}$ for CAC 40 stocks. We found that the median reporting delay for trades is $-900 \; \text{ms}$: on average, trades are reported 900 milliseconds \emph{before} the change is recorded in tick by tick data.}.
            \item Match the trade to the first likely market order with the same price and volume and label the corresponding event as a market order---making sure the change in order book state happens at the best price limits.
            \item Remaining negative variations are labeled as cancellations.
        \end{enumerate}
\end{enumerate}
Doing so, we have an average matching rate of around $85\%$ for CAC 40 stocks. As a byproduct, one gets the sign of each matched trade, that is, whether it is buyer or seller initiated.

\subsection*{Parameters estimation}
If $T$ be the trading duration of interest each day ($T=4.5$ hours---$[9:30$--$14:00]$---in our case.) Then
\begin{equation}
    \widehat{\lambda^M} := \frac{\#\text{trades}}{2 T},
\end{equation}
and
\begin{multline}
    \widehat{\lambda^L_i} := \frac{1}{2 T} .\\
    \left( \#\text{\small buy limit orders arriving $i$ tick away from the best opposite quote} \right. \\
    + \left. \#\text{\small sell lim. orders etc.} \right).
\end{multline}
For cancellations, we need to normalize the count by the average number of shares $\left<\mathbf{X}_i\right>$ at distance $i$ from the best opposite quote:
\begin{multline}
    \widehat{\lambda^C_i} := \frac{1}{\left<\mathbf{X}_i\right>} \frac{1}{2 T} .\\
    \left( \#\text{\small cancellation orders in the bid side arriving $i$ tick away from the best opposite quote} \right. \\
    + \left. \#\text{\small cancellation orders in the ask side etc.} \right),
\end{multline}
We then average $\widehat{\lambda^M}$, $\widehat{\lambda^L_i}$ and $\widehat{\lambda^L_i}$ across 23 trading days to get the final estimates.
As for the volumes, we estimate by maximum likelihood the parameters $(\widehat{v}, \widehat{s})$ of a lognormal distribution separately for each order type. We depict the parameters in figures \ref{RatesPanel} and \ref{VolumePanel}.

\begin{table}
    \begin{center}
        \begin{tabular}{|c|c|}
            \hline
            K & 30\\
            \hline \hline
            $a_\infty$ & 250 \\
            $b_\infty$ & 250 \\
            \hline \hline
            $(v^M, s^M)$ & $(4.00, 1.19)$ \\
            $(v^L, s^L)$ & $(4.47, 0.83)$ \\
            $(v^C, s^C)$ & $(4.48, 0.82)$ \\
            \hline \hline
            $\lambda^{M^{\pm}}$ & 0.1237\\
            \hline
        \end{tabular}
        \caption{\label{ModelParams1}Model parameters for the stock SCHN.PA (Schneider Electric) in March 2011 (23 trading days). Figures \ref{RatesPanel} and \ref{VolumePanel} are graphical representation of these parameters.}
    \end{center}
\end{table}

\begin{table}
    \begin{center}
        \begin{tabular}{|c|c|c|c|}
            \hline
            $i$ (ticks) & $\left<\mathbf{X}_i\right>$ (shares) & $\lambda^{L^{\pm}}_{i}$ & $10^3 . \lambda^{C^{\pm}}_{i}$\\
            \hline
            1   &   276 &   0.2842  &   0.8636  \\
            2   &   1129    &   0.5255  &   0.4635  \\
            3   &   1896    &   0.2971  &   0.1487  \\
            4   &   1924    &   0.2307  &   0.1096  \\
            5   &   1951    &   0.0826  &   0.0402  \\
            6   &   1966    &   0.0682  &   0.0341  \\
            7   &   1873    &   0.0631  &   0.0311  \\
            8   &   1786    &   0.0481  &   0.0237  \\
            9   &   1752    &   0.0462  &   0.0233  \\
            10  &   1691    &   0.0321  &   0.0178  \\
            11  &   1558    &   0.0178  &   0.0127  \\
            12  &   1435    &   0.0015  &   0.0012  \\
            13  &   1338    &   0.0001  &   0.0001  \\
            14  &   1238    &   0.0       &   0.0       \\
            15  &   1122    &   \vdots  &   \vdots  \\
            16  &   1036    &           &      \\
            17  &   943     &           &      \\
            18  &   850     &           &      \\
            19  &   796     &           &      \\
            20  &   716     &           &      \\
            21  &   667     &           &      \\
            22  &   621     &           &      \\
            23  &   560     &           &      \\
            24  &   490     &           &     \\
            25  &   443     &           &      \\
            26  &   400     &           &      \\
            27  &   357     &           &      \\
            28  &   317     &           &      \\
            29  &   285     &   \vdots  &   \vdots   \\
            30  &   249     &   0.0       &   0.0       \\
            \hline
        \end{tabular}
        \caption{\label{ModelParams2}Model parameters for the stock SCHN.PA (Schneider Electric) in March 2011 (23 trading days). Figures \ref{RatesPanel} and \ref{VolumePanel} are graphical representation of these parameters.}
    \end{center}
\end{table}

\begin{figure}
\begin{center}
\includegraphics[width=0.95\textwidth]{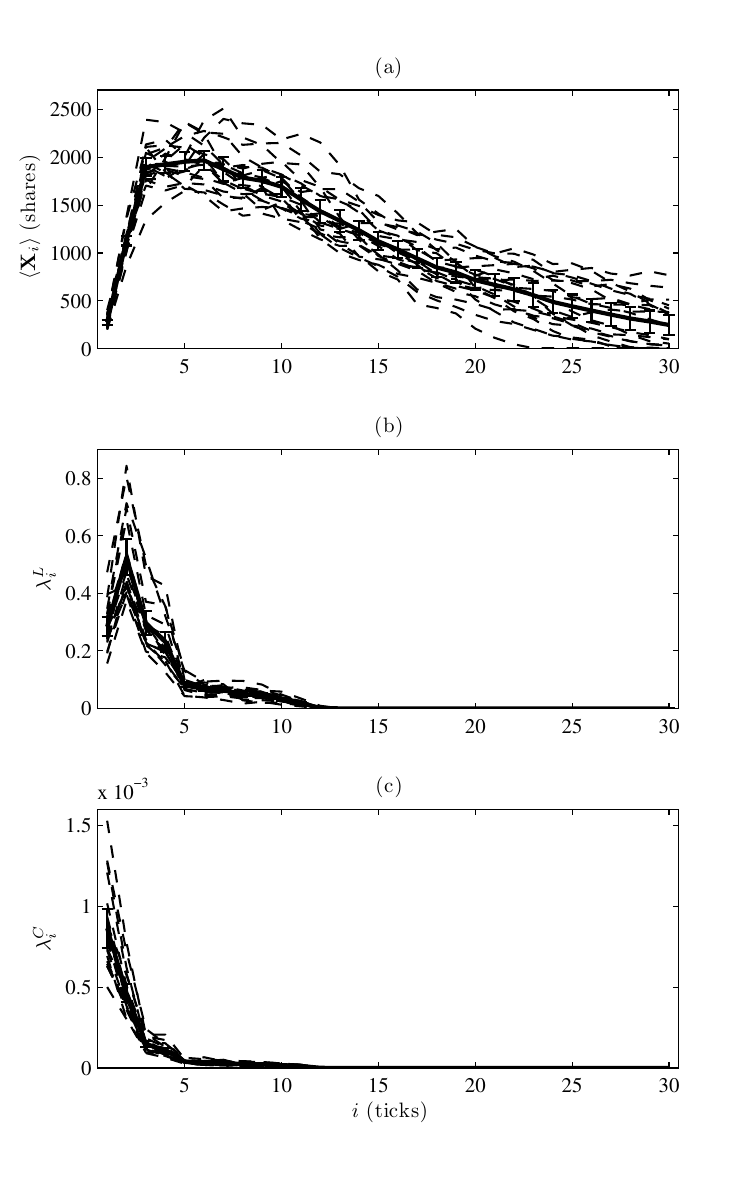}
\caption{\label{RatesPanel}Model parameters: arrival rates and average depth profile (parameters as in table \ref{ModelParams2}). Error bars indicate variability across different trading days.}
\end{center}
\end{figure}

\begin{figure}
\begin{center}
\includegraphics[width=0.95\textwidth]{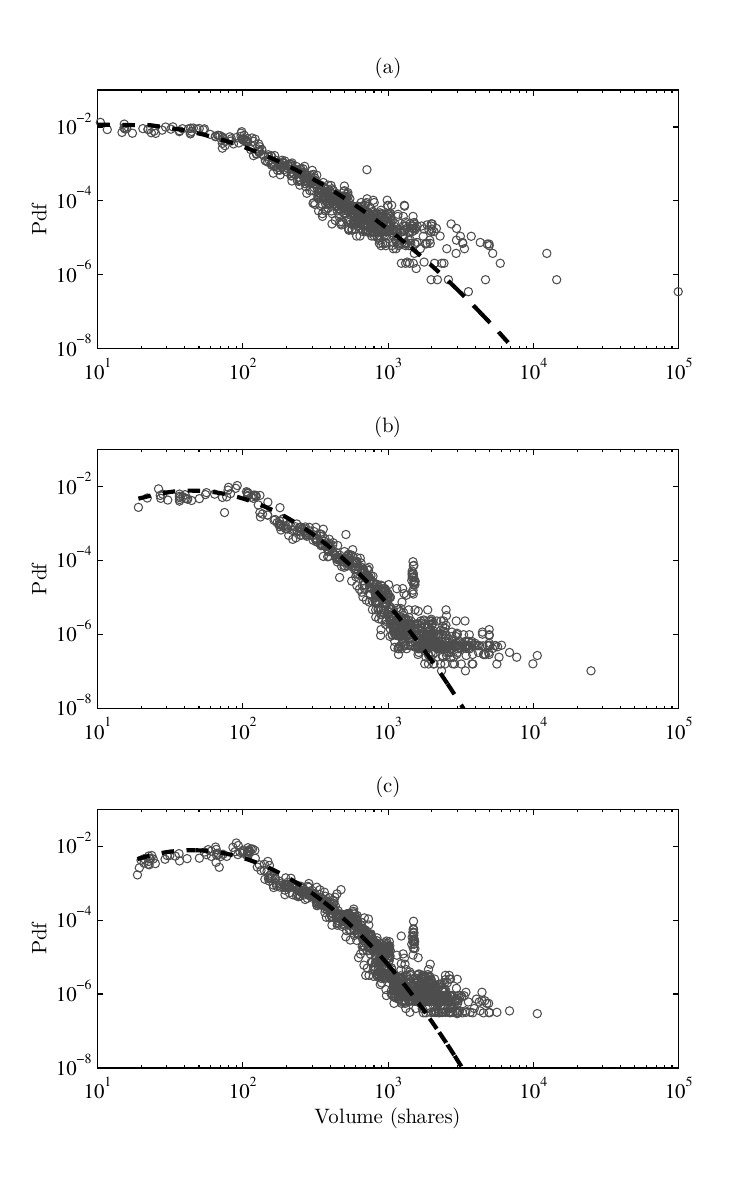}
\caption{\label{VolumePanel}Model parameters: volume distribution. Panels $(a), (b)$ and $(c)$ correspond respectively to market, limit and cancellation orders volumes. Dashed lines are lognormal fits (parameters as in table~\ref{ModelParams1}).}
\end{center}
\end{figure}

\section{Results for CAC 40 stocks}

In order to get a cross-sectional view of the performance of the model on all CAC 40 stocks, we estimate the parameters separately for each stock and run a $100,000$ event simulation for each parameter set. We then compare in figure \ref{Cac40Panel} the average depth, average spread and the long-term ``volatility'' measured directly from the data, to those obtained from the simulations. Dashed line is the identity function---It would correspond to a perfect match between model predictions and the data. Solid line is a linear regression $z_{\text{data}}~=~b_1~+~b_2~z_{\text{model}}$ for each quantity of interest $z$.

Note that despite the good agreement between the average depth profiles (panel $(a)$), and although the model successfully predicts the relative magnitudes of the long-term variance $\sigma_{\infty}^2$ and the average spread $\left<S \right>$ for different stocks, it tends to systematically underestimate $\sigma_{\infty}^2$ and $\left<S \right>$. This may be related to the absence of autocorrelation in order signs in the model and the presence of more drifting phases in actual prices than in those obtained by simulation.

\begin{figure}
\begin{center}
\includegraphics[width=0.95\textwidth]{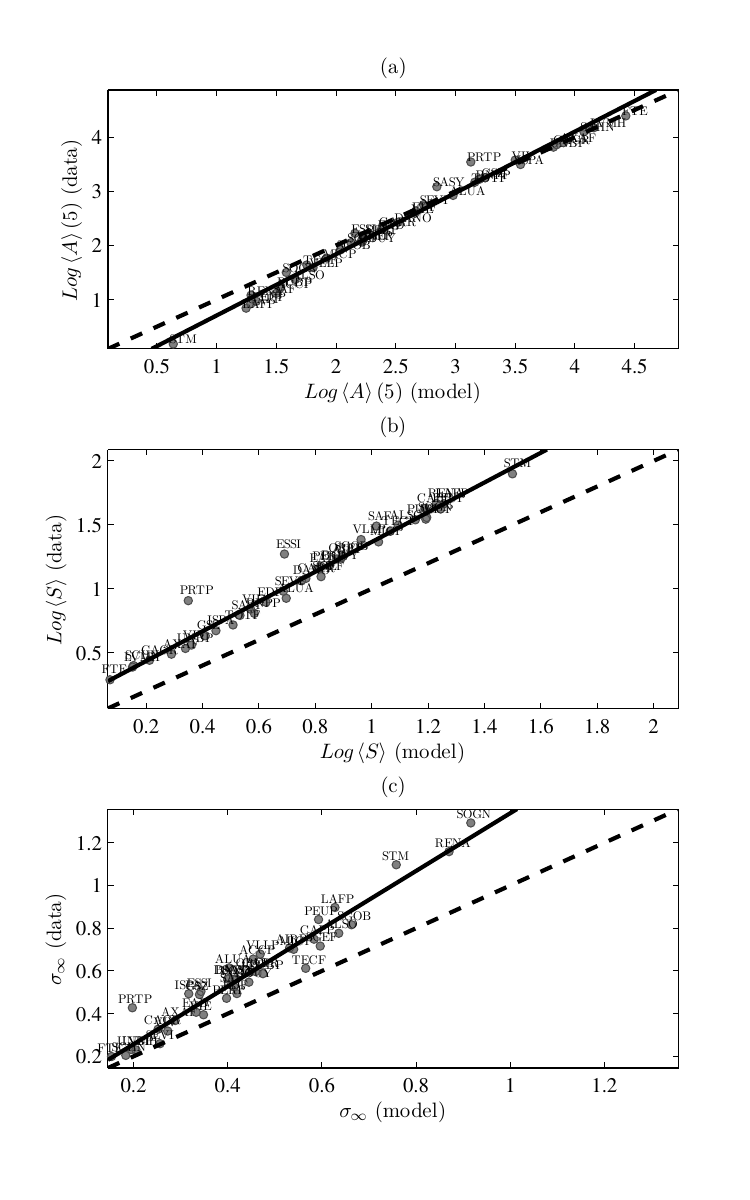}
\caption{\label{Cac40Panel}A cross-sectional comparison of liquidity and price diffusion characteristics between the model and data for CAC 40 stocks (March 2011).}
\end{center}
\end{figure}

\begin{table}
    \begin{center}
        \begin{tabular}{|c|c|c|c|}
            \hline
             & $b_1$ & $b_2$ & $R^2$ \\
            \hline
            $Log \left<A\right>(5)$& $-0.42 \, (\pm 0.11)$ & $1.13 \, (\pm 0.04)$ & $0.99$\\
            \hline
            $Log \left<S\right>$& $ 0.20 \, (\pm 0.06)$ & $1.16 \, (\pm 0.07)$ &  $0.97$\\
            \hline
            $\sigma_{\infty}$& $-0.012 \, (\pm 0.05)$ & $1.35 \, (\pm 0.11)$ & $0.94$\\
            \hline
        \end{tabular}
        \caption{\label{Regression}CAC 40 stocks regression results.}
    \end{center}
\end{table}

\section*{Acknowledgments}

We warmly thank D. Challet, N. Millot, O. Torne and R. Zaatour
for useful discussions and two anonymous referees for very helpful
comments that improved this paper considerably and for pointing out reference \cite{Toth}.


\end{document}